\documentclass[11pt]{article}
\usepackage{amsfonts}
\usepackage{hyperref}
\usepackage{epsfig}
\usepackage{graphicx}

\newtheorem{theorem}{Theorem}[section]
\newtheorem{definition}[theorem]{Definition}
\newtheorem{proposition}[theorem]{Proposition}
\newtheorem{lemma}[theorem]{Lemma}

\newtheorem{corollary}[theorem]{Corollary}
\newtheorem{remark}[theorem]{Remark}
\newtheorem{notation}[theorem]{Notation}

\renewcommand{\Re}{\mathbb{R}}

\newcommand{\Co}{{H}}

\newenvironment{proof}[1][Proof]{ \noindent \textbf{#1: }}{$\Box$
	\bigskip}

\oddsidemargin=0.15in \evensidemargin=0.15in \topmargin=-.5in
\textheight=9in \textwidth=6.25in

\begin{document}

\title{Error Resilient Space Partitioning\footnote{A preliminary version of the paper~\cite{Our-ICALP} was presented at the ICALP 2021 conference. This work has no associated data.}} 

\author{
	Orr Dunkelman\thanks{Computer Science Department, University of Haifa, Israel. \texttt{orrd@cs.haifa.ac.il}.} \mbox{ and} Zeev Geyzel\thanks{Mobileye, an Intel company -- Jerusalem, Israel. \texttt{zgeyzel@gmail.com}.} \mbox{ and} Chaya Keller\thanks{Department of Computer Science, Ariel University, Ariel, Israel.	\texttt{chayak@ariel.ac.il}. Research supported by the Israel Science Foundation (grant no. 1065/20).} \\
	and Nathan Keller\thanks{Department of Mathematics, Bar Ilan University, Ramat Gan, Israel. \texttt{nkeller@math.biu.ac.il}. Research supported by the European Research Council under the ERC starting grant agreement number 757731 (LightCrypt) and by the BIU Center for Research in Applied Cryptography and Cyber Security in conjunction with the Israel National Cyber Bureau in the Prime Minister's Office.} \mbox{ and} Eyal Ronen\thanks{School of Computer Science, Tel Aviv University. Member of the Check Point Institute for Information Security. \texttt{eyal.ronen@cs.tau.ac.il}.} \mbox{ and} Adi Shamir\thanks{Faculty of Mathematics and Computer Science, Weizmann Institute of Science, Israel. \texttt{adi.shamir@weizmann.ac.il}.} \mbox{   and} Ran J. Tessler\thanks{Incumbent of the Lilian and George Lyttle Career Development Chair, Department of Mathematics, Weizmann Institute of Science, Israel. \texttt{ran.tessler@weizmann.ac.il}. Supported by the Israel Science Foundation (grant no.~335/19) and by a research grant from the center for new scientists of Weizmann Institute of Science.}
}

\maketitle

\begin{abstract}
A major research area in discrete geometry is to consider the best way to partition the $d$-dimensional Euclidean space $\mathbb{R}^d$ under various quality criteria.
In this paper we introduce a new type of space partitioning that is motivated by the problem of rounding noisy measurements from the continuous space $\mathbb{R}^d$ to a discrete subset of representative values. Specifically, we study partitions of $\mathbb{R}^d$ into bounded-size tiles colored by one of $k$ colors, such that   
tiles of the same color have a distance of at least $t$ from each other. Such tilings allow for \emph{error-resilient} rounding, as two points of the same color and distance less than $t$ from each other are guaranteed to belong to the same tile, and thus, to be rounded to the same point.

The main problem we study in this paper is characterizing the achievable tradeoffs between the number of colors $k$ and the distance $t$, for various dimensions $d$. On the qualitative side, we show that in $\mathbb{R}^d$, using $k=d+1$ colors is both sufficient and necessary to achieve $t>0$. On the quantitative side, we achieve numerous upper and lower bounds on $t$ as a function of $k$. In particular, for $d=3,4,8,24$, we obtain sharp asymptotic bounds on $t$, 
as $k \to \infty$. We obtain our results with a variety of techniques including isoperimetric inequalities, the Brunn-Minkowski theorem, sphere packing bounds, Bapat's connector-free lemma, and \v{C}ech cohomology. 

\end{abstract}

\section{Introduction}

Studying various types of partitioning of a continuous space such as $\mathbb{R}^d$ is a central topic in discrete geometry (see, e.g.,~\cite[Chapters~6,12]{dBvKOS08} and the references therein), and each type of partition has different properties and applications within computer science, electrical engineering, and applied mathematics. For example, in {\em error-correcting codes} (that are extensively used in data communication) one tries to squeeze the largest possible number of equal sized disjoint balls into the input space, while in {\em vector quantization}~\cite{Gray84} (that is used extensively in data compression) one tries to completely cover the input space with a small number of tiles whose volumes are as similar as possible. 

In this paper we investigate a new variant of space partitioning, motivated by the problem of rounding noisy measurements.
The basic geometric problem we consider is partitioning $\mathbb{R}^d$ into tiles of volume $\leq 1$, colored by one of $k$ colors, such that tiles of the same color have a distance of at least $t$ from each other.

Such partitions provide a way to handle the inherent discontinuity of rounding processes. For any rounding scheme, there exist points very close to each other that are rounded to different points. The partitioning schemes we consider allow removing this discontinuity using a tiny amount of additional information -- the color of the tile to which the point belongs. As any two points $x_1,x_2$ of the same color such that $\mathrm{dist}(x_1,x_2)<t$ are guaranteed to belong to the same tile, the partition yields a rounding scheme that guarantees that each point is rounded correctly even if an error of $<\frac{t}{2}$ occurred in its measurement, provided that its color is known. Thus, our schemes provide \emph{error-resilient rounding}.

To demonstrate the basic idea, consider the one-dimensional case, in which $x_1$ and $x_2$ are real numbers that have to be rounded to the same nearby integer whenever they are close enough. 
The way we think about the problem is to consider a colored tiling of the real line with two colors:
All the values in $[-0.5, 0.5)$, $[1.5,2.5)$, etc.~are colored by 1, and all the values in $[0.5, 1.5)$, $[2.5,3.5)$, etc.~are colored by 2. The side information provided about $x_1$ is the color of the tile in which it is located, and the way we process $x_2$ is to round it to the center of the closest tile that has the same color as that of $x_1$. The essential property of our partition is that the minimum distance between any two tiles with the same color is $1$, and thus we can ``inflate'' all the tiles of any particular color in order to include any erroneously measured value $x_2$ up to a distance of $0.5$ away from the original tile, and still get non-overlapping tiles that make it possible to uniquely associate such points with original tiles.

\begin{figure}
	\centering
	\includegraphics[width=1.05\columnwidth]{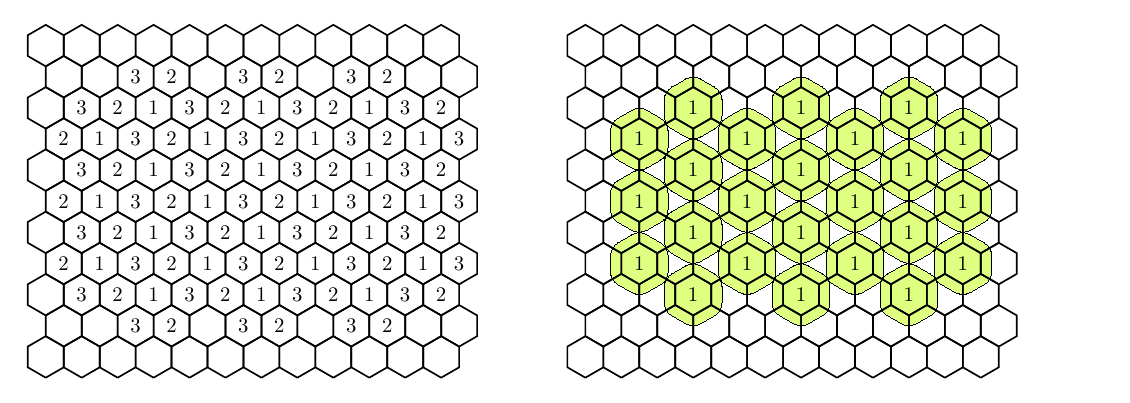}				\caption{A 3-colored hexagonal tiling of the plane, and a maximal non-intersecting inflation of the tiles colored 1}
	\label{fig:hexagons}
\end{figure}

To make this perspective clearer, consider the two-dimensional plane. In the obvious checkerboard tiling by unit squares, we need at least 4 colors if we do not allow equi-colored tiles to touch. We can reduce the number of colors to 3 by considering the hexagonal partition of the plane depicted in the left part of Figure~\ref{fig:hexagons}. Given a two-dimensional point $x_1$, we always round it to the center of the hexagon in which it is located, and given $x_2$ we round it to the center of the nearest hexagon that has $x_1$'s color. To determine the error resilience of this scheme, we inflate all the hexagonal tiles of a particular color by the same amount until they touch each other, as depicted in the right part of Figure~\ref{fig:hexagons}. As it turns out, this natural scheme is not optimal since the inflated hexagons' corners touch prematurely, leaving large gaps between them. A 3-colored tiling with a larger distance between equi-colored tiles (and thus, with a higher error resilience) will be described in Section~\ref{ssec:bricks}. 

An alternative way to view our new type of space partitioning is as a natural generalization of error correcting codes. In standard error correcting codes we consider a discrete collection of points in some high-dimensional space, that are called codewords, and inflate each one of the codewords by the maximal possible amount $\epsilon$ so that all the resultant balls of radius $\epsilon$ around the codewords do not intersect each other. The best codes are obtained when we place the codewords in such a way that this $\epsilon$ is maximized, for some desired density of codewords. In our model, we inflate tiles rather than points by the maximum possible amount $\epsilon$ so that all the inflated tiles do not intersect each other. There are many types of tiles one can consider. We concentrate on tiles that are connnected, of bounded size, and that form a partition of the whole space. Since such tiles already touch each other, we cannot inflate all of them simultaneously; instead, we assign to each tile one of $k$ possible colors, and separately inflate all the tiles of any particular color by the same $\epsilon$. We can thus choose, which space partitioning we start with and how to assign a color to each tile, and our goal is 
to maximize the achievable distance between any two tiles that have the same color.



\subsection{Our results}

In this paper we concentrate on the geometric problem of characterizing the achievable tradeoffs between the number of colors $k$ and the minimum distance between equi-colored tiles $t$, for various dimensions $d$. A discussion on possible applications of the space partitions we study can be found in~\cite{Our-ICALP}.

\paragraph{Qualitative results.} First, we study the basic question, what is the minimal number of colors for which an error resilient rounding scheme is at all possible. We obtain a complete answer in the following theorem.
\begin{theorem}\label{thm:main-qualitative}
    For any $d \in \mathbb{N}$, there exists a partition of $\mathbb{R}^d$ into connected tiles colored by one of $k=d+1$ colors, such that each tile is contained in a box of side length $1$ and there exists $t>0$ for which the distance between any two equi-colored tiles is at least $t$. On the other hand, no such $d$-colored tiling exists. 
\end{theorem}
We prove the existence result by providing an explicit construction, and the non-existence result by an algebraic-topologic argument using either Bapat's connector-free lemma or the Assouad-Nagata dimension theory. We also prove an alternative version of the non-existence result, in which the condition that each tile is contained in a box of side length $1$ is replaced by the condition that each tile is contractible, using the \v{C}ech cohomology theory.

\paragraph{Quantitative results.} On the quantitative side, we study the question of maximizing the guaranteed minimum distance $t$ between equi-colored tiles (or equivalently, maximizing the error resilience) for given $d$ and $k$, assuming (as a normalization) that the maximum volume of a tile is $1$.
Note that here, lower bounds on $t$ correspond to existence results, while upper bounds on $t$ correspond to non-existence results. In the non-existence direction, we obtain several upper bounds on the achievable error resilience, using different techniques from geometry and analysis, including isoperimetry, the Brunn-Minkowski inequality and results on the sphere packing problem. In the existence direction, we construct a variety of concrete tiling schemes with a good error resilience.
In particular, while for $d=2$ and $k=3$ the hexagonal tiling scheme described above yields $t=0.62$, we present a tiling with $t=0.708$, and show that any 3-color tiling satisfies $t \leq 0.826$.
In the general setting of $k=d+1$ colors in $\mathbb{R}^d$, we show that the maximal resilience 
is between $\Omega(\frac{1}{d})$ and $O(\frac{\log d}{\sqrt{d}})$.

In the asymptotic setting where the number $k$ of colors grows to infinity, we obtain tight asymptotic lower and upper bounds on the resilience in dimensions $2,3,8,$ and $24$, as follows.
\begin{theorem}\label{thm:main-asymptotic}
    Let $f(d,k)$ be the maximal value of $t$ such that there exists a partition of $\mathbb{R}^d$ into connected tiles colored by one of $k$ colors, such that the volume of each tile is at most $1$ and the distance between any two equi-colored tiles is at least $t$. Then as $k \to \infty$, we have:
    \[
    \frac{f(2,k)}{k^{1/2}} \to \frac{2^{1/2}}{3^{1/4}} \approx 1.074, \qquad \frac{f(3,k)}{k^{1/3}} \to 2^{1/6} \approx 1.122, \qquad \frac{f(8,k)}{k^{1/8}} \to \sqrt{2}, \qquad \frac{f(24,k)}{k^{1/24}} \to 2.
    \]
\end{theorem}
The upper bound proofs use the  breakthrough results on sphere packing~\cite{CKMRV17,Hales05,V17}, and the matching lower bounds are obtained by explicit constructions. 

The upper and lower bounds on $t$ we obtain in various settings are summarized in Table~\ref{tab:summary}.

\begin{table}[tb]
\scalebox{0.9}{
    \begin{tabular}{|c|c|c|c|c|}
		\hline
		Scenario & Lower Bound & Upper Bound & Techniques & Source \\
		& (LB) on $t$  & (UB) on $t$  & & \\
		\hline \hline
		3 colors & 0.708 & 0.826 & Brunn-Minkowski ineq.~(UB),
		& Sec.~\ref{ssec:BM} (UB), \\
		in $\mathbb{R}^2$ & & & Brick wall tiling~(LB) & Sec.~\ref{ssec:bricks} (LB) \\
		\hline
		4 colors  & 1 & 1.128 & Brunn-Minkowski ineq.~(UB),  & Sec.~\ref{ssec:BM} (UB), \\
		in $\mathbb{R}^2$ & & & Brick wall tiling~(LB) & Sec.~\ref{ssec:bricks} (LB) \\
		\hline
		$k$ colors & $1.074\sqrt{k}-O(1)$ & $1.074\sqrt{k}$ & Circle packing~(UB), & Sec.~\ref{ssec:SP-high} (UB),
		\\
		in $\mathbb{R}^2$ & & & HCR tiling~(LB) &	Sec.~\ref{ssec:hcr} (LB) \\
		\hline
		4 colors & 0.5 & 0.730 & Brunn-Minkowski ineq.~(UB), & Sec.~\ref{ssec:BM} (UB), \\
		in $\mathbb{R}^3$ & & & 3-dim Brick wall~(LB) & Sec.~\ref{ssec:bb} (LB) \\
		\hline
		$k$ colors & $(1.122-o(1))k^{1/3}$ & $1.122k^{1/3}$ & Sphere packing~(UB), & Sec.~\ref{ssec:SP-high} (UB), \\
		in $\mathbb{R}^3$ & & & CPB tiling~(LB) & Sec.~\ref{ssec:CPB} (LB) \\
		\hline
		$k$ colors & $(1.414-o(1))k^{1/8}$ & $1.414k^{1/8}$ & Sphere packing~(UB), & Sec.~\ref{ssec:SP-high} (UB), \\
		in $\mathbb{R}^8$ & & & CPB tiling~(LB) & Sec.~\ref{ssec:CPB} (LB) \\
		\hline
		$k$ colors & $(2-o(1))k^{1/24}$ & $2k^{1/24}$ & Sphere packing~(UB), & Sec.~\ref{ssec:SP-high} (UB), \\
		in $\mathbb{R}^{24}$ & & & CPB tiling~(LB) & Sec.~\ref{ssec:CPB} (LB) \\
		\hline
		$d+1$ colors & $\Omega(\frac{1}{d})$ & $O(\frac{\log d}{\sqrt{d}})$ & Brunn-Minkowski ineq.~(UB), & Sec.~\ref{ssec:BM} (UB), \\
		in $\mathbb{R}^d$ & & & Dimension-reducing tiling~(LB) &  Sec.~\ref{ssec:dim-reduce} (LB) \\
		\hline
		\multicolumn{5}{l}{The `$k$ colors' scenario refers to the asymptotic setting where $k \to \infty$}\\
        \multicolumn{5}{l}{LB -- lower bound, UB -- upper bound,}\\
		\multicolumn{5}{l}{HCR -- honeycomb of rectangles, CPB -- close packing of boxes}
	\end{tabular}
}	\caption{Summary of our lower and upper bounds on $t$, for different values of $d$ and $k$}
	\label{tab:summary}
\end{table}

\subsection{Related work}
\label{sec:sub:related-work}

The new notion of space partitioning we study is closely related to the notion of \emph{sparse partitions}, introduced by Jia et al.~\cite{JiaLNRS05} at STOC'05 (that is, in turn, closely related to the notion of \emph{sparse covers} introduced by Awerbuch and Peleg~\cite{AwerbuchP90a} at FOCS'90). Jia et al.~defined an $(r,\sigma,I)$-partition of a metric space $V$ to be a partition of the space into tiles of diameter $\leq r\sigma$, such that for each $v \in V$, the ball $B_r(v)$ intersects at most $I$ sets in the partition. They proved that for certain parameters $(r,\sigma,I)$, such sparse partitions can be constructed efficiently, and used sparse partitions to devise efficient approximation algorithms for the Universal Steiner tree problem. At FOCS'22, Czumaj et al.~\cite{CzumajJK0Y22} used sparse partitions of $\mathbb{R}^d$ to obtain efficient approximation algorithms for the Euclidean uniform facility location problem. This notion was studied in several other recent papers as well (e.g.,~\cite{BhattacharyaCFJJL25,CzumajJFK0Y22,Filtser24}).

The relation of our notion to sparse partitions comes from the fact that in any partition of $\mathbb{R}^d$ into tiles of $k$ colors such that the distance between any two equi-colored tiles is $\geq t$, any open ball of radius $\frac{t}{2}$ intersects at most $k$ tiles, as it cannot intersect two tiles of the same color. Thus, space partitions of the type we consider yield $(\frac{t}{2},\sigma,d)$ partitions (in the notations of~\cite{JiaLNRS05}), where $\frac{t}{2}\sigma$ is the maximal diameter of a tile. In particular, it is noted in~\cite{CzumajJK0Y22} that a variant of one of the space partition results they obtain can be proved using our \emph{dimension-reducing} tiling (see Section~\ref{ssec:dim-reduce}). However, in general, the two notions are incomparable, as we consider $k$-colored tilings and care only about distances between equi-colored tiles (while in sparse partitions there is a single color and the number of tiles intersecting a given ball is counted), and as we normalize tiles by bounding their volume (while in sparse partitions, the bound is on the diameter).

\subsection{Organization of the paper}

In Section~\ref{sec:qualititave} we study the minimal number of colors that allows achieving $t>0$, and prove the non-existence direction of Theorem~\ref{thm:main-qualitative}. In Section~\ref{sec:upper} we obtain upper bounds on the achievable values of $t$ (in terms of $d$), and in particular, we prove the upper bound part of Theorem~\ref{thm:main-asymptotic}. In Section~\ref{sec:lower} we present several explicit constructions of tilings. In particular, we prove the existence direction of Theorem~\ref{thm:main-qualitative} and the lower bound part of Theorem~\ref{thm:main-asymptotic}. In Appendix~\ref{App:Normalization} we discuss alternative normalizations and distance functions. 

\section{Our Setting}
\label{sec:notation}

In this section we present the basic setting that will be assumed throughout the paper.

\subparagraph*{Colored tiling.} We study \emph{partitions} of $\mathbb{R}^d$ into subsets that are
\emph{connected} (in the topological sense) and \emph{bounded}. Each set in the partition is colored in one of $k$ colors. It will be convenient for us to work with the topological \emph{closures} of the sets of the partition. Those closures form a \emph{tiling} of $\mathbb{R}^d$, where each tile is \emph{connected}, \emph{bounded}, and \emph{closed}, and the tiles intersect only in their boundaries. We note that the boundedness and the connectedness assumptions are natural and were made also in works on sparse partitions (e.g.,~\cite{Filtser24}). 
In some of the results we make additional assumptions on the tiles or drop some of the assumptions; such changes are stated explicitly. 


\subparagraph*{Error resilience and inflation.} In order to compute the error resilience of a given tiling (with respect to the $L_2$ distance), we consider all tiles of the same color and inflate them (i.e., we replace the tile $T$ by the set $T'=\{y: \exists x \in T, \|x-y\|_2<r\}$ for some $r>0$) until they touch each other. Clearly, the error resilience is the maximal $r$, for which such a non-intersecting inflation is possible. We note that in convex geometry, such an inflation $T'$ is called \emph{the outer parallel body of radius $r$} of $T$ (see~\cite[p.~943]{Handbook}) or the open $r$-\emph{neighborhood} of $T$ with respect to the $L_2$-norm. The minimal distance between two equi-colored points in different tiles is denoted by $t$, and so, the error resilience is $t/2$.


\subparagraph*{Normalization.} The $d$-dimensional volume of a figure $T \subset \mathbb{R}^d$ is denoted by $\lambda(T)$. To avoid pathologies, we make the natural assumption that any inflated tile $T'$ satisfies $\lambda(\bar{T}')=\lambda(T')$, where $\bar{T}'$ is the topological closure of $T'$. We normalize the tiling by assuming that \emph{the volume of each tile is bounded by $1$} (like in the 1-dimensional case presented in the introduction, where all tiles are segments of length $1$). All the quantitative bounds we state in Sections~\ref{sec:upper} and~\ref{sec:lower} refer to this assumption.
Normalization with respect to other natural metrics, as well as alternative distance metrics, are discussed in
Appendix~\ref{App:Normalization}.

\section{The Minimal Number of Colors Required for Error Resilience}
\label{sec:qualititave}

In this section we prove the `non-existence' statement of Theorem~\ref{thm:main-qualitative} -- namely, that any colored tiling of $\mathbb{R}^d$ that achieves a positive error resilience, uses at least $d+1$ colors.
We prove two variants of this statement, under different natural assumptions on the tiles. The first assumes that all tiles are uniformly bounded and its proof relies either on Bapat's connector-free lemma (a generalization of Sperner's lemma) or on the \emph{Assouad-Nagata dimension} theory (we provide two proofs, each relying on a different tool). The second statement assumes that the tiles and their non-empty intersections are contractible (while not having to be uniformly bounded) and its proof uses the \v{C}ech cohomology theory. 

The lower bound $d+1$ on the number of required colors is tight; a matching construction for any $d$ is presented in Section~\ref{ssec:dim-reduce}.

\subsection{Lower bound for uniformly bounded tiles}

The main result of this subsection is the following proposition, that proves the non-existence statement of Theorem~\ref{thm:main-qualitative}.

\begin{proposition}\label{Prop:Lower-via-Sperner}
	For any $m>0$, the following holds. Let $T_1,T_2,\ldots$ be a colored tiling of $\mathbb{R}^d$ in $d$ colors, in which each tile is contained in a box with side length $m$. Then there exist two equi-colored tiles whose intersection is non-empty. Consequently, any tiling with minimum distance of at least $t>0$ between equi-colored tiles uses at least $d+1$ colors.
\end{proposition}

We present two proofs of the proposition. The first uses 
\emph{Bapat's connector-free lemma} -- a natural generalization of the classical Sperner's lemma~\cite{Sperner28}, and proves a slightly stronger version of the result (see Proposition~\ref{Prop:Lower-via-Sperner-strong} below). The second proof uses the \emph{Assouad-Nagata dimension} theory. As the tools used in the first proof are more commonly-known, and as it yields a stronger result, we present it in more detail, and then we briefly sketch the second proof.

\subsubsection{Bapat's connector-free lemma -- continuous version}

Since Bapat's lemma was originally proved only in $\mathbb{R}^2$ and in a discrete setting, we first provide a proof of a continuous version in $\mathbb{R}^d$, and then derive Proposition~\ref{Prop:Lower-via-Sperner} from it. Let $\Delta$ be a $d$-simplex in a Euclidean space, i.e., the convex hull of $d+1$ points $x_0,\ldots,x_d$ that do not lie in a $d$-space. The $i$'th face of $\Delta$ is the span of $\{x_j\}_{j\neq i}$. A \emph{connector} in a $d$-simplex is a connected set that intersects all its $(d-1)$-dimensional faces.

Bapat's connector-free lemma asserts the following:
\begin{theorem}\label{thm:bapat_cont}
	Let $C_0,\ldots, C_{d}$ be a cover of a $d$-simplex $\Delta$ by closed sets such that the minimal distance between two connected components of the same set is $\delta>0$. Suppose that the interiors of the sets are disjoint and that no $C_i$ contains a connector. Then $\bigcap_{i=0}^d C_i\neq\emptyset.$
\end{theorem}
In order to prove the theorem we will reduce it to an analogous discrete claim. The reduction is simple, but requires some more terminology.

A \emph{triangulation} $T$ of a simplex $\Delta\subset \mathbb{R}^d$ is a cover of it by
simplices whose interiors are disjoint, such that the intersection of any set of simplices is either empty or the convex hull of some vertices. Note that the vertices of $\Delta$ are, in particular, vertices of the triangulation, and that the faces of $\Delta$ are endowed by an induced triangulation.
We denote by $\partial(T)$ the supremum of distances between vertices that share an edge.
The \emph{$1$-skeleton} of $T$ is the graph formed by the vertices and the edges. A \emph{discrete connector} is a connected subset of the $1$-skeleton of $T$ that contains vertices from each facet of $\Delta.$
We can now state the discrete version of Theorem~\ref{thm:bapat_cont} (which is the actual statement proved by Bapat, for $d=2$).
\begin{theorem}\label{thm:bapat_disc}\cite{Bapat1}
	Suppose that the vertices of $T$ are partitioned into disjoint sets denoted $A_0,A_1,\ldots,A_d$ such that no restriction of the 1-skeleton of $T$ to $A_i$ contains a discrete connector. Then there exists a simplex in $T$ whose $d+1$ vertices belong to different sets $A_i$.
\end{theorem}

\begin{proof}[Proof of Theorem \ref{thm:bapat_cont}, assuming Theorem~\ref{thm:bapat_disc}]
	Assume towards contradiction that there exist sets $C_0,\ldots,C_d$ that cover $\Delta,$ such that no $C_i$ contains a connector, but $\bigcap_{i=0}^d C_i=\emptyset.$
	Define the function $f:C_0\times C_1\cdots\times C_d\to \mathbb{R}_+$ by setting $f(p_0,\ldots,p_d)$ to be the diameter of the convex hull of $(p_0,\ldots,p_d)$, which is the maximal distance between two $p_i$'s. The domain of the function $f$ is compact (here we use the assumption that the minimal distance between two connected components of the same $C_i$ is at least $\delta$) and its range is $\mathbb{R}_{+}$ (since we assumed $\bigcap_{i=0}^d C_i=\emptyset$). Hence, $f$ attains a minimum $\epsilon>0$.
	
	Let
	$\eta = \min(\delta,\epsilon)/3$.
	Let $\tilde{C}_i$ be the $\eta$-thickening of $C_i$ in $\Delta,$ i.e.,
	$\tilde{C}_i=\{p\in \Delta:\mathrm{dist}(p,C_i)<\eta\}$.
	Then $\tilde{C}_i$ is an open cover of $\Delta.$ By the choice of $\eta$, neither $\tilde{C}_i$ contains a connector,\footnote{To be precise, this relies on the slightly stronger assumption that each connected component of $C_i$ is at least $\delta$-far from one of the facets of $\Delta$. While this extra assumption can be avoided in the proof, it clearly holds in our setting so we make it for simplicity.} and $\bigcap_{i\in\{0,\ldots, d\}} \tilde{C}_i=\emptyset$. 
	
	Let $T$ be a triangulation of $\Delta$ such that $\partial(T)<\eta$ and all the vertices of $T$ lie in the interiors of the sets $C_i.$ (Clearly, such a triangulation exists.) Define $A_i$ as the subset of vertices that lie in $\mathrm{int}(C_i).$ These sets are well defined since the interiors of the different $C_i$'s are disjoint. Since $\partial(T)<\eta,$ each edge between two vertices that belong to the same $A_i$ lies in $\tilde{C}_i.$ Indeed, its endpoints are in $C_i$ and any point on the edge is of distance less than $\eta$ to any endpoint, hence it is in $\tilde{C}_i.$
	Thus, since $\tilde{C}_i$ contains no connector, $A_i$ contains no discrete connector.
	
	We can now apply Theorem \ref{thm:bapat_disc} to deduce that there exists a simplex $t= \{v_0,\ldots,v_d\} \in T$ such that $\forall i: v_i \in A_i$. But since $v_i \in C_i$ for all $i$, this implies
	$f(v_0,\ldots,v_d)<\eta<\epsilon,$ a contradiction to the definition of $\epsilon.$ This completes the proof.
\end{proof}

To prove Theorem~\ref{thm:bapat_disc}, we use the classical Sperner's lemma~\cite{Sperner28}. To present it, a few more definitions are due.

 A ($d+1$)-\emph{labelling} of a triangulation $T$ of the simplex $\Delta=\mathrm{conv}(e_0,\ldots,e_d)$ is a function $\ell:V(T) \rightarrow \{0,1,\ldots,d\}$, that is, an assignment of one of $d+1$ colors to each vertex of the triangulation. A $(d+1)$-labelling $\ell$ is called \emph{proper} if $\ell(e_i)=i$, and for each $v \in T$ that belongs to a lower-dimensional face $\mathrm{conv}(e_{i_1},\ldots,e_{i_r})$, we have $\ell(v) \in \{i_1,\ldots,i_r\}$.

\begin{theorem}[Sperner's lemma]
	For any triangulation $T$ of $\Delta$, any proper labelling of $T$ contains a simplex all whose vertices have different labels.
\end{theorem}

\begin{proof}[Proof of Theorem \ref{thm:bapat_disc}]
	We define, using the sets $A_0,\ldots,A_d$, a proper labelling $\ell$ of $T.$ For any $j$ and any $v \in A_j$, $\ell(v)$ is defined as the minimal $i\in\{0,\ldots,d\}$ such that the connected component of $v$ in $A_j$ does not intersect the $i$'th face of $\Delta$. Note that $\ell$ is well-defined, since each $v$ belongs to a single $A_j$ and no $A_j$ contains a connector. Clearly, $\ell(v)\neq i,$ whenever $v$ belongs to the $i$'th face of $\Delta$. Furthermore, this implies that if $v \in \mathrm{conv}(e_{i_1},\ldots,e_{i_r})$, then $\ell(v) \in \{i_1,\ldots,i_r\}$ (as all other colors are forbidden). Hence, $\ell$ is proper.
	
	By Sperner's lemma, applied to the labelling $\ell$,
	there exists a simplex $\{v_0,\ldots,v_d\}\in T$ all whose vertices have different labels. Assume w.l.o.g.~that $\ell(v_i)=i$.
	We want to show that each $v_i$ belongs to a different $A_j,$ which will complete the proof. Assume towards contradiction $v_{i},v_k\in A_j$ for $i\neq k.$ On the one hand, $\ell(v_i)=i\neq k=\ell(v_k).$ On the other hand, $v_i,~v_k$ belong to the same connected component in $A_j,$ hence, by the definition of $\ell$ they must map to the same value. A contradiction, and Theorem \ref{thm:bapat_disc} follows.
\end{proof}

\subsubsection{Proof of Proposition~\ref{Prop:Lower-via-Sperner} via Bapat's Lemma}

We prove the following slightly stronger statement which not only excludes the possibility of $d$-colored tilings with positive minimum distance between equi-colored tiles, but also shows a structural property of all $(d+1)$-colored tilings with positive minimum distance between equi-colored tiles. 

\begin{proposition}\label{Prop:Lower-via-Sperner-strong}
	For any $m>0$, the following holds. Let $T_1,T_2,\ldots$ be a colored tiling of $\mathbb{R}^d$ in $d+1$ colors, in which each tile is contained in a box with side length $m$. If the minimum distance between any two equi-colored tiles is  $\delta>0$, then there exist tiles in all $d+1$ colors that intersect at a point. Consequently, any tiling with positive minimum distance between equi-colored tiles uses at least $d+1$ colors.
\end{proposition}

\begin{proof}
Let $T_1,T_2,\ldots$ be a tiling of $\mathbb{R}^d$ that satisfies the assumptions of the proposition. Consider the restriction of the tiling to a large simplex $\Delta$ (say, of side length $100m$).

For $i=0,\ldots,d$, denote by $C_i \subset \Delta$ the union of all tiles colored $i$, restricted to $\Delta$. Clearly, $C_i$ is a closed set and the distance between any two connected components of $C_i$ is at least $\delta$. 

We claim that no $C_i$ contains a connector. Indeed, a connector cannot include points from different tiles. A single tile is included in a box with side length $m$, and thus, cannot touch all facets of a simplex with side length $100m$. Hence, there is no single-colored connector.

Therefore, we can apply Theorem~\ref{thm:bapat_cont} to deduce that $\bigcap_{i=0}^d C_i \neq \emptyset$, which is exactly the assertion of the proposition.
\end{proof}

\subsubsection{Proof of Proposition~\ref{Prop:Lower-via-Sperner} via the Assouad-Nagata dimension}

An alternative way to obtain a lower bound on the number of colors is using the Assouad-Nagata dimension of metric spaces~\cite{Ass82,Nag58}. In order to present this notion, a few preliminary definitions are needed.

A covering of a metric space $(X,d)$ is called \emph{$C$-bounded} if the diameter of each set in the covering is at most $C$. The \emph{$\delta$-multiplicity} of a covering is the minimal $k \in \mathbb{N}_{\geq 0}$ such that each subset of $X$ with diameter at most $\delta$ has a non-empty intersection with at most $k$ members of the covering.  
\begin{definition}
	The Assouad-Nagata dimension $\mathrm{dim}_N(X)$ of a metric space $(X, d)$ is the smallest integer $n$, for which there exists a constant $C>0$ such that for any $r>0$, the space $X$ admits a $Cr$-bounded covering with $r$-multiplicity at most $n+1$.\footnote{We note that the notion of \emph{sparse partitions} of Jia et al.~\cite{JiaLNRS05}, discussed in Section~\ref{sec:sub:related-work}, is closely related to the Assouad-Nagata dimension. Indeed, the Assouad-Nagata dimension $\mathrm{dim}_N(X)$ is the smallest integer $n$, for which there exists a constant $C>0$ such that for any $r>0$, $X$ admits an $(r,C,n+1)$-partition.}
\end{definition}
A classical result on the Assouad-Nagata dimension (see, e.g.,~\cite[Sec.~2]{LS05}) is:
\begin{proposition}\label{Thm:AN}
	$\mathrm{dim}_N(\mathbb{R}^d)=d$.
\end{proposition}

We claim that this result implies 
Proposition~\ref{Prop:Lower-via-Sperner}.


\medskip

\begin{proof}[Proof of Proposition~\ref{Prop:Lower-via-Sperner}]
	Assume on the contrary that there exists a colored tiling $T_1,T_2,\ldots$ of $\mathbb{R}^d$ with tiles of at most $d$ colors such that each tile is contained in a box of side length $m$ and the distance between any two equi-colored tiles is at least $\delta$. Let $C= m \sqrt{d}/\delta$. We claim that for any $r>0$ there exists a $Cr$-bounded covering of $\mathbb{R}^d$ with $r$-multiplicity at most $d$. Indeed, consider the covering of $\mathbb{R}^d$ with the sets $\frac{r}{\delta}T_1,\frac{r}{\delta}T_2,\ldots$, where $\frac{r}{\delta}T_i =\{\frac{r}{\delta}x: x\in T_i\}$. As each $T_i$ is contained in a box of side length $m$, the covering is $(C r)$-bounded. On the other hand, any $S \subset \mathbb{R}^d$ with diameter at most $r$ intersects at most $d$ sets of the cover, since it cannot intersect sets that come from two equi-colored tiles. Hence, the $r$-multiplicity of the covering is at most $d$. This implies that $\mathrm{dim}_N(\mathbb{R}^d)<d$, which contradicts Proposition~\ref{Thm:AN}. 
\end{proof}

\begin{remark}
We note that while the `Assouad-Nagata dimension' notion is somewhat close to our `resilient coloring' notion, it seems that it cannot be used to obtain the other results presented in the paper. First, this topological notion captures only the question whether there is \emph{some} positive resilience, but not quantitative questions on the maximal possible resilience. Moreover, even with respect to the qualitative question of determining the minimal number of colors required in a resilient colored tiling of $\mathbb{R}^d$, it appears that the Assouad-Nagata dimension provides only a lower bound but not an upper bound. We prove a matching upper bound using a non-trivial construction in Section~\ref{ssec:dim-reduce}.
\end{remark}

\subsection{Lower bound for contractible tiles}
\label{sec:sub:contractible}

Recall that a set in $\mathbb{R}^d$ is called \emph{contractible} if it can be continuously shrunk to a point within the set. (The formal definition is that the identity is homotopic to a constant map.)

Informally, in this section we prove that if the tiles and their non-empty intersections are finite unions of contractible sets (that do not have to be uniformly bounded), then at least $d+1$ colors are required for positive error resilience. 
Due to the possibility of pathologies, the formal statement is a bit more cumbersome:
\begin{proposition}\label{Prop:Contractible}
	Let $T_1,T_2,\ldots$ be a colored tiling of $\mathbb{R}^d$ with positive minimum distance between equi-colored tiles, in which the tiles and all their non-empty intersections are disjoint unions of finitely many closed contractible sets. Assume that the tiling is locally finite (meaning that the number of tiles that intersect any bounded ball $B(0,r)$ is finite) and that all $T_i$'s are bounded (not necessarily uniformly). In addition, assume that each $T_i$ has an open neighborhood $U_i$ such that for any
	$I,$
	\[\bigcap_{i\in I}U_i\neq\emptyset\Leftrightarrow\bigcap_{i\in I}T_i\neq\emptyset,\]
	and the $U_i$'s and their non-empty intersections are disjoint unions of finitely many contractibles.
		Then the number of colors is at least $d+1$.
\end{proposition}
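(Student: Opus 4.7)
My plan is a Čech-cohomological argument, in the spirit suggested by the abstract. Suppose for contradiction that the tiling uses at most $d$ colors. Positive fault tolerance forces tiles of the same color to have pairwise positive distance, hence to be disjoint; therefore every point of $\mathbb{R}^d$ lies in at most $d$ distinct tiles. Via the hypothesized biconditional $\bigcap_{i\in I}U_i\neq\emptyset\Leftrightarrow\bigcap_{i\in I}T_i\neq\emptyset$, the same bound transfers to the open cover: no $d+1$ of the $U_i$'s have a common point.

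Next I would convert $\{U_i\}$ into a bona fide good cover. Split each $U_i$ into its finitely many connected components, all of which are open and contractible, and assign each the color of its parent $T_i$. Every non-empty finite intersection of components is a connected component of a non-empty finite intersection of $U_i$'s, which is by hypothesis a disjoint union of contractibles; hence, after taking components once more, one may arrange that all iterated non-empty intersections are single contractibles. Denote the resulting good cover by $\widetilde{\mathcal{U}}$ and its nerve by $\mathcal{N}$. Since same-color components remain pairwise disjoint, the order bound passes to $\widetilde{\mathcal{U}}$ and so $\dim\mathcal{N}\leq d-1$.

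The final step is a relative Čech computation on a large ball. Pick $R$ large and generic, so that $\partial B_R$ is transverse to the tile decomposition; by local finiteness only finitely many $\widetilde U_\alpha$ meet $B_R$, and after possibly refining once more the restricted cover $\mathcal U_B=\{\widetilde U_\alpha\cap B_R\}$ (resp.\ $\mathcal U_{\partial B}=\{\widetilde U_\alpha\cap\partial B_R\}$) is a good cover of $B_R$ (resp.\ of $\partial B_R$). By the Nerve Theorem for pairs, the relative Čech cohomology $\check H^*(\mathcal U_B,\mathcal U_{\partial B};\mathbb{Z})$ computes $H^*(B_R,\partial B_R;\mathbb{Z})\cong H^*(D^d,S^{d-1};\mathbb{Z})$, which is $\mathbb{Z}$ in degree $d$. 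On the other hand, the nerve $\mathcal{N}_B$ of $\mathcal U_B$ is a subcomplex of $\mathcal{N}$, so $\dim\mathcal{N}_B\leq d-1$; the relative cochain group $\check C^d(\mathcal U_B,\mathcal U_{\partial B})$ therefore vanishes, whence $\check H^d=0$, contradicting $H^d(B_R,\partial B_R;\mathbb{Z})=\mathbb{Z}$.

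The step I expect to be the main obstacle is verifying the good-cover property after restriction to $B_R$ and $\partial B_R$. Generically, transversality of $\partial B_R$ to every tile and every iterated tile intersection will guarantee that each restricted piece decomposes into finitely many contractibles, which can then be refined into a genuine good cover; but making this transversality argument airtight for the rather general ``disjoint union of contractibles'' tiles allowed by the hypothesis — as opposed to smooth submanifolds — is delicate and may require either an approximation of the tiling by a triangulable one, or bypassing the ball entirely and working instead with Čech cohomology with compact supports on $\mathbb{R}^d$ (using boundedness of all $\widetilde U_\alpha$ and local finiteness of the cover), which yields the same contradiction against $H^d_c(\mathbb{R}^d;\mathbb{Z})=\mathbb{Z}$.
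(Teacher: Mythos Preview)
Your overall strategy is the paper's strategy: show that no $d+1$ of the $U_i$ can meet, hence the \v{C}ech complex of $\{U_i\}$ has no $d$-simplices and $\check H^d$ vanishes, contradicting a nonvanishing $d$-th cohomology of $\mathbb{R}^d$. Where you and the paper differ is in which flavor of cohomology delivers the contradiction. The paper goes directly to \v{C}ech cohomology with compact supports on $\mathbb{R}^d$ (with values in $\underline{\mathbb{R}}$), invokes the comparison $\check H^d_c(\mathcal{U},\underline{\mathbb{R}})\cong H^d_{dR,c}(\mathbb{R}^d)\cong\mathbb{R}$ for an almost-good, locally finite cover by bounded open sets (Bott--Tu, Theorem~8.9, with the obvious compact-support modifications), and is done. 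This is precisely the ``bypass'' you sketch in your last sentence; the paper never touches a ball or its boundary.

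Two remarks on your detours. First, your splitting-into-components step is unnecessary: the paper works throughout with the \emph{almost-good} cover $\{U_i\}$ and notes that the \v{C}ech--de Rham comparison already holds for such covers (the obstructions live in higher de Rham groups of the intersections, which vanish for disjoint unions of contractibles). Second, your primary route through $(B_R,\partial B_R)$ does face exactly the obstacle you flag: for tiles that are merely disjoint unions of contractibles there is no transversality theory available, so establishing that $\{\widetilde U_\alpha\cap B_R\}$ and $\{\widetilde U_\alpha\cap\partial B_R\}$ are again good covers is genuinely problematic and would likely force you back to compact supports anyway. So your proposal is correct, but the cleanest execution is the one you list as the fallback, and that is what the paper does.
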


A similar method proves an analogous statement for colored tilings of the sphere $\mathbb{S}^d$ (i.e., the unit sphere in $\mathbb{R}^{d+1}$):
\begin{proposition}\label{Prop:Contractible_Sn}
	Let $T_1,T_2,\ldots,T_N$ be a colored tiling of $\mathbb{S}^d$ with positive minimum distance between equi-colored tiles, in which the tiles and all their non-empty intersections are disjoint unions of finitely many closed contractible sets. Assume that each $T_i$ has an open neighborhood $U_i$ such that for any set of indices $I,$
	\[\bigcap_{i\in I}U_i\neq\emptyset\Leftrightarrow\bigcap_{i\in I}T_i\neq\emptyset,\]
	and the $U_i$'s and their non-empty intersections are disjoint unions of finitely many contractibles.
		Then the number of colors is at least $d+1$.
\end{proposition}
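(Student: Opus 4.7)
The plan is to mimic the \v{C}ech-cohomological strategy underlying Proposition~\ref{Prop:Contractible}, exploiting the compactness of $\mathbb{S}^d$ so that ordinary (rather than compactly-supported) cohomology can be used throughout. The input is the finite open cover $\mathcal{U}=\{U_1,\dots,U_N\}$ of $\mathbb{S}^d$; by the hypothesis $\bigcap_{i\in I}U_i\neq\emptyset\Leftrightarrow\bigcap_{i\in I}T_i\neq\emptyset$, the nerve of $\mathcal{U}$ faithfully records the combinatorics of the tiling, so it is enough to force the nerve to contain a $d$-dimensional simplex. Once that is established, a pigeonhole step immediately yields the conclusion.

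To produce a $d$-simplex I would run the Mayer--Vietoris (Leray) spectral sequence associated with $\mathcal{U}$,
\[
E_1^{p,q}=\bigoplus_{i_0<\cdots<i_p}H^q\bigl(U_{i_0}\cap\cdots\cap U_{i_p};\mathbb{Z}\bigr)\ \Longrightarrow\ H^{p+q}(\mathbb{S}^d;\mathbb{Z}).
\]
By hypothesis every nonempty intersection $U_{i_0\cdots i_p}$ is a disjoint union of finitely many contractible sets, so $H^q(U_{i_0\cdots i_p})=0$ for every $q>0$ while $H^0$ is free of finite rank. The spectral sequence therefore lives on the row $q=0$ and collapses at $E_2$, giving $H^n(\mathbb{S}^d)\cong E_2^{n,0}$ for every $n$. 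Applied in degree $d$, the nonvanishing of $H^d(\mathbb{S}^d;\mathbb{Z})\cong\mathbb{Z}$ forces $E_1^{d,0}\ne 0$, so there exist indices $i_0<\cdots<i_d$ with $U_{i_0}\cap\cdots\cap U_{i_d}\neq\emptyset$.

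Transferring this back to the tiles via the biconditional yields $T_{i_0}\cap\cdots\cap T_{i_d}\neq\emptyset$. Picking any $v$ in this intersection, $v$ lies in $d+1$ distinct tiles; if only $d$ colors were in use, two of these tiles would necessarily share a color and meet at $v$, contradicting positive fault tolerance, which demands that any two same-colored tiles be separated by a positive distance. Hence at least $d+1$ colors are required.

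The main obstacle is the rigorous verification that the spectral sequence really collapses under the stated (relatively mild) regularity of the $U_i$; the $U_i$ and their intersections are only disjoint unions of contractibles rather than single contractibles, so one must argue componentwise that the presheaf $U\mapsto H^q(U)$ vanishes in positive degrees on each nonempty intersection from $\mathcal{U}$, and that the finite disjoint union inherits this vanishing. A clean back-up, should this get technical, is to invoke a generalized nerve theorem with a refined nerve in which vertices and higher simplices correspond to connected components of intersections; the same pigeonhole argument then applies, since a refined $d$-simplex still witnesses $d+1$ tiles meeting at a common point.
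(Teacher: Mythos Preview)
Your argument is correct and is essentially the paper's proof, just with the machinery unpacked: the paper cites the \v{C}ech--de~Rham isomorphism for ``almost good'' covers (Bott--Tu, Theorem~8.9) as a black box and then argues by contradiction that $C^d(\mathcal{U},\underline{\Re})\ne 0$, whereas you rederive that isomorphism via the Mayer--Vietoris spectral sequence and read off $E_1^{d,0}\ne 0$ directly. Your worry about the collapse is harmless---$H^q$ of a finite disjoint union of contractibles vanishes for $q>0$ by additivity, so the $q>0$ rows of $E_1$ are identically zero and the sequence degenerates at $E_2$ as you claim.
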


\begin{remark}
	The assumptions on the $U_i$'s are needed for applying to them the \v{C}ech cohomology machinery described below, and the assumptions on the $T_i$'s and their relation to the $U_i$'s are needed for transferring the conclusion of the cohomology argument from the $U_i$'s to the $T_i$'s. We stress that for most natural tilings, the additional assumption on the existence of the neighborhoods $U_i$ follows from the existence of $T_i$'s with the corresponding properties. However, there are topological pathologies, in which this is not the case. 
\end{remark}

\noindent The proof of Propositions~\ref{Prop:Contractible} and~\ref{Prop:Contractible_Sn} uses the notion of \v{C}ech cohomology and classical results regarding its properties. For the ease of reading, we begin with an intuitive explanation of the proof ideas, and then present the formal proof.

\subparagraph*{Intuitive proof.}
The $d$'th (singular) cohomology group is a topological invariant of a manifold that roughly counts ``non trivial holes'' of dimension $d$. A classical result asserts that the $d$'th cohomology group of a $d$-dimensional compact oriented manifold like $\mathbb{S}^d$ is $\mathbb{R}$. (This corresponds to the intuitive understanding that $\mathbb{S}^d$ has one $d$-dimensional hole.) The de-Rham cohomology and the \v{C}ech cohomology are analytic and algebro-geometric/combinatorial invariants, that in many cases agree with their topological cousin. In particular, the $d$'th de-Rham and \v{C}ech cohomologies of $\mathbb{S}^d$ are equal to $\mathbb{R}$ as well.

The $d$'th \v{C}ech cohomology with respect to an open cover of the manifold depends on properties of intersections of $d+1$ sets in that cover. In general, it depends on the sets that form the cover, however, it is known that if these sets and their non-empty intersections are finite disjoint unions of contractibles, then the cohomology groups remain the same, independently of the cover. In particular, if the $d$'th \v{C}ech cohomology with respect to such a cover is non trivial, then there must be $d+1$ sets with a non-empty intersection.

Hence, for our cover $U_1,U_2,\ldots$, we know that its $d$'th \v{C}ech cohomology is $\mathbb{R}$. This readily completes the proof of the proposition for $\mathbb{S}^d$, as this implies that there must be a point that belongs to at least $d+1$ of the $U_i$'s.
The proof in $\mathbb{R}^d$ works in essentially the same way, with cohomology groups replaced by cohomology groups with compact support.

\subparagraph*{Formal proof.}
For the proof we recall the notion of \v{C}ech cohomology with values in the constant sheaf $\underline{\Re}$, and describe the slightly less standard concept of \v{C}ech cohomology with compact support.

\subparagraph*{Definitions.} Let $S$ be either $\Re^{d}$ or a compact manifold such as $\mathbb{S}^{d}.$
Let $\mathcal{U}=\{U_1, U_2,\ldots\}$ be an open cover of $S.$ If $S$ is compact, we assume the collection to be finite. If $S$ is $\Re^{d}$, we assume it to be locally finite and assume in addition that each $U_i$ is bounded.
\begin{itemize}
	\item A \emph{$q-$simplex} $\sigma=(U_{i_0},\ldots, U_{i_q})$ of ${\mathcal {U}}$ is an ordered collection of $q+1$ different sets chosen from $\mathcal {U},$ such that \[\bigcap_{k=0}^{q}U_{i_k}\neq\emptyset.\]
	
	\item For a $q-$simplex $\sigma =(U_{i_k})_{k\in \{0,\ldots ,q\}}$, the $j$'th \emph{partial boundary} is the $(q-1)$-simplex
	\[\partial_{j}\sigma :=(U_{i_k})_{k\in \{0,\ldots ,q\}\setminus \{j\}},\]
	obtained by removing the $j$'th set from $\sigma$.
	
	\item A \emph{$q-$cochain} of $\mathcal{U}$ is a function that associates to any $q-$simplex a real number. The $q-$cochains form a vector space denoted by $C^{q}(\mathcal{U},\underline{\Re}),$ with operations
	\[(\lambda f +\mu g)(\sigma)=\lambda f(\sigma)+\mu g(\sigma), \quad \mbox{where} \quad \lambda,\mu\in\Re,~f,g\in C^{q}(\mathcal{U},\underline{\Re}),~\sigma~\mbox{ is a $q-$simplex}.\]
	Similarly, we define $C^{q}_c(\mathcal{U},\underline{\Re}),$ as the vector space of $q-$cochains \emph{with compact support}, meaning those cochains that assign $0$ to all $q-$simplices, except for finitely many.
	
	\item There is a \emph{differential map} $\delta_{q}:C^{q}(\mathcal{U},\underline{\Re})\to C^{q+1}(\mathcal{U},\underline{\Re})$ whose application to $f\in C^{q}(\mathcal{U},\underline{\Re})$ is the $(q+1)-$cochain $\delta_q(f)$ whose value at a $(q+1)-$simplex $\sigma$ is
	\[(\delta_{q}f)(\sigma)=\sum_{j=0}^{q+1}(-1)^{j}f(\partial_j\sigma).\] The restriction of $\delta_q$ to $C^{q}_c(\mathcal{U},\underline{\Re})$ maps it to $C^{q+1}_c(\mathcal{U},\underline{\Re})$.
	
	\item  It can be easily seen that $\delta_{q+1}\circ\delta_q=0$.
	
	\item The \emph{$q$'th \v{C}ech cohomology group (with compact support)} of $S$ with respect to the cover $\mathcal{U}$ and values in $\underline{\Re}$ is
	\[\check{\Co}^{q}(\mathcal{U},\underline{\Re}):=\mbox{Ker}(\delta_q)/\mbox{Image}(\delta_{q-1}),\]
	\[\check{\Co}^{q}_c(\mathcal{U},\underline{\Re}):=\mbox{Ker}(\delta_q|_{C^{q}_c(\mathcal{U},\underline{\Re})})/\mbox{Image}(\delta_{q-1}|_{C^{q-1}_c(\mathcal{U},\underline{\Re})}).\]
	
	
	\item A cover (by open sets) is \emph{good} if all its sets as well as their multiple intersections are either empty or contractible. It is \emph{almost good} if all non empty intersections are unions of finitely many disjoint contractible components.
\end{itemize}

\subparagraph*{Classical results we use.} The first result we use is the following:
\begin{theorem}\label{theorem:cech_equals_dR}
	If $S$ is a compact smooth orientable manifold (such as $\mathbb{S}^{d}$), and $\mathcal{U}$ is a good or an almost good finite cover, then
	\[\check\Co^{i}(\mathcal{U},\underline{\Re})\simeq \Co^{i}_{dR}(S),\] where the right hand side is the standard de-Rham cohomology group.
	
	 \noindent
	Similarly, if $S=\Re^{d}$ and $\mathcal{U}$ is a locally finite good or almost good cover whose sets are bounded, then
	\[\check\Co^{i}_c(\mathcal{U},\underline{\Re})\simeq \Co^{i}_{dR,c}(S),\] where the right hand side is the $i$'th de-Rham cohomology group with compact support.
\end{theorem}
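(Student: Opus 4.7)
The plan is to prove both statements via the Čech–de Rham double complex, following the standard strategy of Bott–Tu, \emph{Differential Forms in Algebraic Topology}. Introduce $K^{p,q} := C^{p}(\mathcal{U}, \Omega^{q})$, whose elements assign to every nonempty $(p+1)$-fold intersection $U_{i_{0}} \cap \cdots \cap U_{i_{p}}$ a smooth $q$-form on that intersection. This bigraded vector space carries the Čech coboundary $\delta$ of the excerpt and the exterior derivative $d$, which anti-commute, and my goal is to compute the cohomology of the total complex $(K, \delta + (-1)^{p}d)$ via its two spectral sequences. In the $S = \Re^{d}$ case, replace $C^{p}$ by $C^{p}_{c}$ (Čech cochains with compact support) and $\Omega^{q}$ by $\Omega^{q}_{c}$ (compactly supported forms) throughout, so that the analogous total complex now computes the compactly supported invariants on both sides.

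First I would filter row-wise and take $d$-cohomology first. Because each nonempty intersection is contractible, the ordinary Poincaré lemma yields $\Co^{0}_{dR} = \Re$ and $\Co^{q}_{dR} = 0$ for $q > 0$; in the compactly supported $\Re^{d}$ setting, each intersection is an open contractible subset diffeomorphic to $\Re^{d}$ (after mild tameness), so the Poincaré lemma with compact support concentrates the row cohomology in top degree and still yields $\Re$. Hence the $E_{1}$-page collapses onto the Čech complex $C^{\bullet}(\mathcal{U}, \underline{\Re})$ (resp.\ $C^{\bullet}_{c}(\mathcal{U}, \underline{\Re})$) with $\delta$ as the differential, so its $E_{2}$-page is exactly $\check{\Co}^{i}(\mathcal{U}, \underline{\Re})$ (resp.\ $\check{\Co}^{i}_{c}(\mathcal{U}, \underline{\Re})$) and the sequence then degenerates.

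Next I would filter column-wise and take $\delta$-cohomology first. Choose a smooth partition of unity $\{\rho_{j}\}$ subordinate to $\mathcal{U}$, which exists by finiteness in the compact case and by local finiteness plus boundedness of the members in the $\Re^{d}$ case. The operator $(K\omega)_{i_{0} \cdots i_{p-1}} := \sum_{j} \rho_{j}\,\omega_{j\,i_{0}\cdots i_{p-1}}$ is the classical chain homotopy proving exactness of the augmented sequence $0 \to \Omega^{q}(S) \to C^{0}(\mathcal{U}, \Omega^{q}) \to C^{1}(\mathcal{U}, \Omega^{q}) \to \cdots$ (with the analogous statement for compactly supported forms and cochains). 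Consequently the other $E_{1}$-page collapses onto $\Omega^{\bullet}(S)$ with the $d$-differential, whose cohomology is $\Co^{i}_{dR}(S)$ (resp.\ $\Co^{i}_{dR,c}(S)$). Since both spectral sequences converge to the same total cohomology, the claimed isomorphisms follow.

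The main obstacle is the almost good case, where a nonempty intersection splits as a disjoint union of finitely many contractibles rather than a single one. The row-wise Poincaré lemma then produces $\Re^{n}$ in degree zero, with $n$ the number of components, which does not literally agree with the Čech cochains of the excerpt (one real number per simplex). I would handle this by passing to the canonical refinement $\mathcal{U}'$ obtained by replacing every simplex of $\mathcal{U}$ by the connected components of its intersection; on $\mathcal{U}'$ the original cover becomes good, and one checks by a direct cochain map that $\check{\Co}^{\bullet}(\mathcal{U}, \underline{\Re}) \simeq \check{\Co}^{\bullet}(\mathcal{U}', \underline{\Re})$, so the good-case argument applies. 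In the $\Re^{d}$ case one must further verify that the refinement inherits local finiteness and bounded members, which follows from the original hypotheses together with the finiteness assumption on the number of components of each intersection. All other ingredients — Poincaré lemma, existence of partitions of unity, convergence of the two spectral sequences — are standard.
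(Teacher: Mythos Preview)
Your overall route coincides with the paper's: both invoke the Bott--Tu \v{C}ech--de Rham double complex and its two spectral sequences, and the paper in fact just cites Theorem~8.9 of Bott--Tu together with the remark that higher de Rham cohomology of the intersections still vanishes in the almost good case.

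There is, however, a genuine gap in your treatment of the $S=\Re^d$ case. You propose $K^{p,q}=C^p_c(\mathcal{U},\Omega^q_c)$, with compactly supported forms on each intersection. This breaks in two places. First, the \v{C}ech coboundary $\delta$ of the excerpt is built from \emph{restrictions}, and restriction does not send $\Omega^q_c(U_I)$ into $\Omega^q_c(U_J)$ for $U_J\subset U_I$: a form compactly supported in $U_I$ need not have compact support after restriction to a smaller open subset, so with the given differential the double complex is not even well defined. Second, even granting your extra tameness hypothesis that every nonempty intersection is diffeomorphic to $\Re^d$, the compactly supported Poincar\'e lemma concentrates the row cohomology in degree $d$, not $0$; the row-wise spectral sequence then gives $\Co^n_{\mathrm{tot}}\cong\check{\Co}^{\,n-d}_c(\mathcal{U},\underline{\Re})$, and comparison with the column side would yield $\check{\Co}^{i}_c\cong\Co^{\,i+d}_{dR,c}$ rather than the stated $\check{\Co}^{i}_c\cong\Co^{i}_{dR,c}$. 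The modification the paper indicates avoids both problems: keep \emph{ordinary} forms $\Omega^q(U_I)$ on each intersection but replace direct products by direct sums (equivalently, pass from $C^p$ to $C^p_c$). The ordinary Poincar\'e lemma then applies with no degree shift, while on the column side the kernel of $\delta$ on $\bigoplus_i\Omega^q(U_i)$ consists exactly of global forms vanishing outside finitely many bounded $U_i$, i.e.\ $\Omega^q_c(\Re^d)$.

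A smaller point on the almost good case: the paper argues directly that the spectral-sequence obstruction vanishes (disjoint unions of contractibles still have trivial higher de Rham cohomology) and does not refine the cover. You are right that the resulting $E_1$ row then carries one copy of $\Re$ per connected component rather than per simplex, so there is a mismatch with the excerpt's simplified definition of a cochain; but your proposed fix---``replace every simplex by the connected components of its intersection''---is not literally a refinement of the \emph{cover}, since the $U_i$ themselves are already contractible and hence connected, while it is their multiple intersections that may disconnect. That step would need a more careful formulation.
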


For further reading about de-Rham cohomology, with or without compact support, we refer the reader to \cite[Sec.~1]{BT}. For further reading about the \v{C}ech cohomology, we refer to \cite[Sec.~8]{BT}. In particular, Theorem \ref{theorem:cech_equals_dR}, for the compact case and good covers is Theorem~8.9 there. The passage to almost good covers is straightforward: In the paragraph that precedes the proof, it is explained that the obstructions to the isomorphism between \v{C}ech and de-Rham cohomologies are given by products of the $i$'th de-Rham cohomology groups, for $i\geq 1,$ of the different intersections $\bigcap_{k=0}^{q} U_{i_k}.$ Since those intersections are disjoint unions of contractibles, their higher cohomology groups vanish, hence there is no obstruction to the isomorphism.

Regarding the case $S=\Re^{d},$ the proof in \cite[Sec.~8]{BT} requires a few small changes: In the statement of Proposition 8.5 there, one needs to replace the de-Rham complex of the manifold with the de-Rham complex with compact support, and the direct product with direct sum.
The maps $r,\delta$ that appear there will still be well defined by our local finiteness assumption on the cover, and the assumption that $U_i$'s are bounded. The proof requires no change.
Then, the double complex in the definition of Proposition~8.8 should also be defined using direct sum rather than direct product, but again there is no change in the proof. Given these changes in definitions, the proof of Theorem~8.9 (also for the almost good case) is unchanged.

\bigskip

The second standard result, which is a consequence of Poincar\'e duality, is the following:
\begin{theorem}\label{theorem:PD}
	For a compact smooth oriented manifold $S$ of dimension $d$ (such as $\mathbb{S}^{d}$),
	\[{\Co}^{d}_{dR}(S)\simeq\Re.\]
	Similarly, for $S=\Re^{d},$ we have $\Co^{d}_{dR,c}(\Re^{d})\simeq \Re$.
\end{theorem}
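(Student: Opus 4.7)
The plan is to deduce both isomorphisms from Poincaré duality together with the elementary computation of the zeroth cohomology of a connected manifold. Recall that for any connected smooth manifold $M$, the de-Rham cohomology $\Co^{0}_{dR}(M)$ is the space of locally constant real-valued functions, which is one-dimensional; in particular $\Co^{0}_{dR}(\mathbb{S}^{d})\simeq\Re$ (for $d\geq 1$) and $\Co^{0}_{dR}(\Re^{d})\simeq\Re$.

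For the compact case, Poincaré duality for a compact oriented smooth $d$-manifold $S$ asserts that the bilinear pairing
\[
\Co^{k}_{dR}(S)\times \Co^{d-k}_{dR}(S)\longrightarrow \Re,\qquad ([\alpha],[\beta])\longmapsto \int_{S}\alpha\wedge\beta,
\]
is non-degenerate (the pairing is well-defined by Stokes' theorem on a compact manifold without boundary). Specializing to $k=d$ gives the canonical isomorphism $\Co^{d}_{dR}(S)\simeq \Co^{0}_{dR}(S)^{\ast}\simeq\Re$. Applied to $S=\mathbb{S}^{d}$ this is exactly the first claim. Concretely, the isomorphism is realized by the integration map $[\omega]\mapsto \int_{S}\omega$, whose surjectivity is seen by integrating any bump-function multiple of the orientation volume form, and whose injectivity is precisely the content of Poincaré duality in top degree.

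For the non-compact case, the analogous statement is Poincaré duality with compact supports: for an oriented $d$-manifold $M$, the pairing $\Co^{k}_{dR,c}(M)\times \Co^{d-k}_{dR}(M)\to \Re$ given by $([\alpha],[\beta])\mapsto\int_{M}\alpha\wedge\beta$ (now finite because $\alpha$ has compact support) is non-degenerate, giving $\Co^{k}_{dR,c}(M)\simeq \Co^{d-k}_{dR}(M)^{\ast}$. Taking $M=\Re^{d}$ and $k=d$, and combining with $\Co^{0}_{dR}(\Re^{d})\simeq\Re$, yields $\Co^{d}_{dR,c}(\Re^{d})\simeq\Re$, which is the second claim. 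Again, the isomorphism is $[\omega]\mapsto \int_{\Re^{d}}\omega$; surjectivity is immediate, and injectivity amounts to the compactly-supported Poincaré lemma in top degree.

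The main obstacle, of course, is that there is no short self-contained derivation of Poincaré duality itself; it requires substantive machinery (a Mayer-Vietoris/induction argument on a good cover, or the Thom isomorphism, as developed in Bott-Tu). Since the paper explicitly invokes Poincaré duality as a classical result and cites \cite{BT}, I would not attempt to reprove it; the proof I would write is precisely the short deduction above, noting that the integration map provides an explicit form of the asserted isomorphisms so that later uses of the theorem can be made concrete when needed.
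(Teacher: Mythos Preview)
Your proposal is correct and matches the paper's treatment: the paper does not prove this theorem but simply states that it is a consequence of Poincar\'e duality and cites \cite[Sec.~7]{BT} for the compact case and \cite[Sec.~4]{BT} for $\Re^{d}$. Your write-up spells out the standard deduction (Poincar\'e duality pairing together with $\Co^{0}_{dR}\simeq\Re$ for connected manifolds, and the integration map as the explicit isomorphism), which is exactly what those references provide; the only small caveat is that the general compact statement needs $S$ connected, which holds for the case $\mathbb{S}^{d}$ the paper actually uses.
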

See, for example, \cite[Sec.~7]{BT} for the compact case, and~\cite[Sec.~4]{BT} for $\Re^{d}.$

\medskip Theorems~\ref{theorem:cech_equals_dR} and~\ref{theorem:PD} yield:
\begin{corollary}\label{cor:non_vanish}
	If $S$ is a compact smooth orientable manifold (such as $\mathbb{S}^{d}$), and $\mathcal{U}$ is a good or an almost good finite cover, then
	\[\check{\Co}^{d}(\mathcal{U},\underline{\Re})=\Re.\]
	Similarly, if $S=\Re^{d}$ and $\mathcal{U}$ is a locally finite good or almost good cover whose sets are bounded, then
	$\check{\Co}^{d}_c(\mathcal{U},\underline{\Re})=\Re$.
\end{corollary}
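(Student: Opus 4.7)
The plan is to prove the corollary by simply composing the two isomorphisms provided by Theorems~\ref{theorem:cech_equals_dR} and~\ref{theorem:PD}, since each theorem already does the real work.

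For the compact case, I would first apply Theorem~\ref{theorem:cech_equals_dR} with $i = d$, which uses the hypothesis that $\mathcal{U}$ is a good or almost good finite cover of the compact smooth orientable manifold $S$ to yield the isomorphism
\[
\check{\Co}^{d}(\mathcal{U},\underline{\Re}) \simeq \Co^{d}_{dR}(S).
\]
I would then invoke Theorem~\ref{theorem:PD}, which uses compactness and orientability of the $d$-dimensional manifold $S$ to give $\Co^{d}_{dR}(S) \simeq \Re$. Composing the two identifications yields $\check{\Co}^{d}(\mathcal{U},\underline{\Re}) \simeq \Re$, as claimed.

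For the case $S = \Re^{d}$, I would run exactly the same argument with the compactly supported versions. The hypothesis that $\mathcal{U}$ is locally finite with bounded sets and is good or almost good lets me apply the second part of Theorem~\ref{theorem:cech_equals_dR}, giving $\check{\Co}^{d}_c(\mathcal{U},\underline{\Re}) \simeq \Co^{d}_{dR,c}(\Re^{d})$. The second part of Theorem~\ref{theorem:PD} then identifies the right-hand side with $\Re$ (note that $\Re^{d}$ is automatically orientable, so no extra assumption is needed), and composing the two isomorphisms completes the proof.

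Since the two input theorems are cited as classical results, there is essentially no real obstacle here; the only thing to check is that the hypotheses of the corollary match the hypotheses required by each of the two theorems in each of the two cases (compact versus $\Re^d$, ordinary versus compactly supported cohomology). This bookkeeping is immediate from the way the hypotheses in the corollary have been phrased to mirror those in Theorems~\ref{theorem:cech_equals_dR} and~\ref{theorem:PD}.
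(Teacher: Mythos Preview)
Your proposal is correct and matches the paper's approach exactly: the paper simply states that Theorems~\ref{theorem:cech_equals_dR} and~\ref{theorem:PD} yield the corollary, and your argument spells out precisely this composition of isomorphisms in each of the two cases.
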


Now we are ready to prove our assertion.

\subparagraph*{Proof of Propositions~\ref{Prop:Contractible} and~\ref{Prop:Contractible_Sn}.} We show that there must exist $d+1$ $T_i$'s whose intersection is non-empty. This clearly implies that for achieving any positive minimum distance between equi-colored tiles, at least $d+1$ colors are needed.

 Assume on the contrary that any $(d+1)$-intersection of the $T_i$'s is empty. Let $U_i$ be as in the statement of the propositions. Then by definition, they form an almost good cover. All intersections of at least $d+1$ $U_i$'s are empty  by our assumptions. Therefore, there are no $d-$simplices, and so $C^{d}(\mathcal{U},\underline{\Re})=0.$ Thus, in the compact case, $\check{\Co}^{d}(\mathcal{U},\underline{\Re})=0.$ But on the other hand, by Corollary \ref{cor:non_vanish}, \[\check{\Co}^{d}(\mathcal{U},\underline{\Re})\simeq\Re,\] a contradiction. For $\Re^{d}$ the same argument works, with $\check{\Co}^{d}_c(\mathcal{U},\underline{\Re})$ in place of $\check{\Co}^{d}(\mathcal{U},\underline{\Re})$.


\section{On the Maximum Possible Error Resilience}
\label{sec:upper}

In this section we consider tilings of $\mathbb{R}^d$ by tiles $T_1,T_2,\ldots$ of volume at most $1$. Each tile is colored in one of $k \geq d+1$ colors, and our goal is to maximize the minimum distance $t$ between two points of the same color that belong to different tiles. (The maximum is taken over all possible tilings that satisfy the mild regularity conditions stated in Section~\ref{sec:notation} and over all possible colorings.) Clearly, the error resilience of a rounding scheme based on such a colored tiling is $t/2$.

We present three upper bounds on $t$, using the Brunn-Minkowski inequality, results on sphere packing, and the Minkowski-Steiner formula, respectively.
In particular, the upper bounds via sphere packing prove the upper bound statement of Theorem~\ref{thm:main-asymptotic}.

The basic idea behind our upper bound proofs is as follows. Assume we have a colored tiling of $\mathbb{R}^d$, with minimum distance $t$. Pick a single color -- say, black -- and consider all black tiles inside a large cube $S$.
We obtain a new collection of tiles $T'_1,T'_2,\ldots,T'_m$ that covers part of $S$. The assumption that the minimum distance between two equi-colored points in different tiles is $t$ implies that if we inflate each black tile $T'_i$ into its open parallel outer body of radius $t/2$,
\begin{equation}\label{Eq:Blowup1}
T''_i =\{x: \exists y \in T'_i, \|x-y\|_2<t/2\},
\end{equation}
then the open parallel bodies $T''_i$ are pairwise disjoint. Hence, the sum of their volumes essentially cannot exceed the volume of the large cube, and this allows bounding $t$ from above.

\subsection{An upper bound using the Brunn-Minkowski inequality}
\label{ssec:BM}

The open parallel bodies $T''_i$ can be represented in terms of the \emph{Minkowski sum} of sets in $\mathbb{R}^d$.
\begin{definition}
	For $A,B \subset \mathbb{R}^d$, the Minkowski sum of $A,B$ is $A+B=\{a+b:a \in A, b \in B\}$.
\end{definition}
In terms of this definition, we have
\begin{equation}\label{Eq:BM1}
T''_i = T'_i + B(0,t/2),
\end{equation}
where $B(0,t/2)$ is an open ball of radius $t/2$ around the origin. This allows us to lower bound the volume of each $T''_i$, using the classical Brunn-Minkowski (BM) inequality (see, e.g.,~\cite{Busemann58}). Recall the inequality asserts the following.
\begin{theorem}[Brunn-Minkowski]
	Let $A,B$ be compact sets in $\mathbb{R}^d$. Then
	\[
	\lambda(A+B)^{1/d} \geq \lambda(A)^{1/d}+\lambda(B)^{1/d},
	\]
	where $\lambda(X)$ is the volume of $X$ (formally, the $d$-dimensional Lebesgue measure of $X$).
	
\end{theorem}

\begin{proposition}\label{Prop:BM}
	Let $T_1,T_2,\ldots$ be a $k$-colored tiling of $\mathbb{R}^d$, with tiles of volume $\leq 1$ and minimal distance $t$. Then
	\[
	t \leq \left(\frac{2\Gamma(d/2+1)^{1/d}}{\sqrt{\pi}} \right) \cdot (k^{1/d}-1),
	\]
	where $\Gamma(\cdot)$ is the Gamma function.
\end{proposition}

\begin{proof}
	Consider a cube $S$ such that $\lambda(S)=n^d$ (for some `large' $n$). By the pigeonhole principle, there exists a color (say, black) that covers at least $\frac{n^d}{k}$ of the volume of $S$. Look at the black tiles whose intersection with $S$ is non-empty, and denote their intersections with $S$ by $T'_1,T'_2,\ldots,T'_m$. Hence, we have $m$ `black' subsets of $S$, each of volume at most 1, whose total volume is at least $\frac{n^d}{k}$.
	
	For each $T'_i$, define $T''_i = T'_i + B(0,t/2)$. By assumption, the regions $T''_i$ are disjoint. Furthermore, they are included in $S+B(0,t/2)$ whose volume is less than $(n+t)^d$. Hence,
	\begin{equation}\label{Eq:BM2}
	\sum_i \lambda(T''_i) \leq (n+t)^d.
	\end{equation}
	By the Brunn-Minkowski inequality, we have
	\[
	\forall i: \lambda(T''_i)^{1/d} \geq \lambda(T'_i)^{1/d} + (b_{t/2})^{1/d},
	\]
	where $b_{t/2}$ is the volume of the $d$-dimensional ball $B(0,t/2)$. Thus,
	\[
	\forall i: \lambda(T''_i) \geq \sum_{j=0}^d {{d}\choose{j}} \lambda(T'_i)^{j/d} (b_{t/2})^{1-\frac{j}{d}}.	
	\]
	Summing over $i$ and using~(\ref{Eq:BM2}), we get
	\begin{equation}\label{Eq:BM3}
	(n+t)^d \geq \sum_{i=1}^m \sum_{j=0}^d {{d}\choose{j}} \lambda(T'_i)^{j/d} (b_{t/2})^{1-\frac{j}{d}}.	
	\end{equation}
	As $0 \leq \lambda(T'_i) \leq 1$, for any $0 \leq j \leq d$ we have $\sum_i \lambda(T'_i)^{j/d} \geq \sum_i \lambda(T'_i) \geq \frac{n^d}{k}$, and hence we obtain
	\[
	(n+t)^d \geq \frac{n^d}{k} \cdot \left(1+b_{t/2}^{1/d} \right)^d.
	\]
	This implies
	\[
	\left(1+\frac{t}{n} \right)k^{1/d}-1 \geq b_{t/2}^{1/d} = \frac{\pi^{1/2}}{\Gamma(\frac{d}{2}+1)^{1/d}} \cdot \frac{t}{2}.
	\]
	Letting $n \rightarrow \infty$ and rearranging, we obtain
	\[
	t \leq \left(\frac{2\Gamma(\frac{d}{2}+1)^{1/d}}{\sqrt{\pi}} \right) \cdot (k^{1/d}-1),
	\]
	as asserted.
\end{proof}

\subparagraph*{Asymptotic upper bound.} Note that we have
\[ 
\Gamma \left(\frac{d}{2}+1 \right)^{1/d} = \left(\frac{1}{\sqrt{2e}}+o_d(1)\right)\sqrt{d},
\]
where $o_d(1)$ denotes a function of $d$ which tends to $0$ as $d \to \infty$. Hence, for large number $k \gg d$ of colors, Proposition~\ref{Prop:BM} gives the upper bound 
\[
t \leq \left(\sqrt{\frac{2}{\pi e}}+o_d(1) \right) \sqrt{d} \cdot k^{1/d}.
\]
This bound is not far from being tight. Indeed, its dependence on $k$ is correct, as it can be easily matched by a periodic cubic tiling, in which each tile is a cube with side length 1 and the basic unit is a large cube with side length $k^{1/d}$ that contains each color in exactly one tile (in the same order). Moreover, even regarding the `coefficient' of $k^{1/d}$, the optimal asymptotic upper bounds for $d=3,8,24$ that we obtain in Section~\ref{ssec:SP-high} via the sphere packing problem, improve over this bound by only a small factor.

\subparagraph*{Upper bound for $k=d+1$ colors.} Note that
$
(d+1)^{1/d}-1 = (1+o_d(1))\frac{\ln(d)}{d}$. 
Therefore, the bound we obtain in this case is
\[
t \leq \left(\sqrt{\frac{2}{\pi e}}+o_d(1) \right)\frac{\ln d}{\sqrt{d}},
\]
which implies that the error resilience decreases to zero as $d$ tends to infinity. For comparison, the lower bound we obtain in Section~\ref{ssec:dim-reduce} is $t \geq \Omega(\frac{1}{d})$.

\subparagraph*{Upper bounds for small values of $d,k$.}  For $d=3,k=4$, the bound is
\[
t \leq \frac{2\Gamma(2.5)^{1/3}}{\sqrt{\pi}} \cdot (4^{1/3}-1) \approx 0.729.
\]
For $d=2$ and $k=3,4$, the upper bounds we obtain are $t \leq 0.826$ and $t \leq 1.128$, respectively. For comparison, the best constructions we have in these settings are $t=0.5$ for $d=3,k=4$, $t=\frac{1}{\sqrt{2}}$ for $d=2,k=3$, and $t=1$ for $d=2,k=4$ (see Section~\ref{sec:sub:bricks}).

\subparagraph*{Discussion.}
The upper bound given by Proposition~\ref{Prop:BM} is loose in two ways. One source of loss is the application of the Brunn-Minkowski inequality. Here, the inequality is tight if the tiles are \emph{balls}, and the farther they are from balls, the larger is the loss. Another source of loss is the space left between the inflations, that is not taken into account in the proof.

Interestingly, there is a dichotomy between these two sources of loss. As follows from the sphere packing problem, when the tiles are balls (and so, there is no loss in the BM inequality), the space between the inflations (and so, the loss of the second type) is relatively large. The space between the inflations can be made smaller if the tiles are taken to be polytopes with a few vertices. However, this comes at the expense of increased loss in the BM inequality, as is demonstrated in Section~\ref{ssec:Steiner}.

\subparagraph*{Optimality of our 1-dimensional tiling.} The argument described above gives an easy proof of the optimality of the 1-dimensional tiling presented in the introduction. Indeed, consider a 2-colored tiling of the line and look at the segment $I = [-n,n]$ for some large $n$. By the pigeonhole principle, we may assume that black tiles cover at least half of $I$. By the 1-dimensional Brunn-Minkowski inequality, for each black tile $T'_i \subset I$ and the corresponding outer parallel body $T''_i=T'_i + (-t/2,t/2)$, we have $\lambda(T''_i) \geq \lambda(T'_i)+t$. As $\forall i: \lambda(T'_i) \leq 1$, there are at least $n$ tiles. Since the $T''_i$'s are pairwise disjoint and included in $[-n-1,n+1]$, we obtain
\[
2n+2 \geq \sum_i \lambda(T''_i) \geq \sum_i \lambda(T'_i)+ \sum_i t \geq n + nt,
\]
and thus, $t \leq (n+2)/n$. By letting $n$ tend to infinity, we obtain $t \leq 1$, implying that the tiling presented in the introduction, in which the black tiles are all segments of the form $[2n-0.5,2n+0.5)$, obtains the maximum possible value of $t$. 

\subparagraph*{Related work.} We note that in the setting of sparse partitions, the Brunn-Minkowski inequality was used to obtain upper bound results by Filtser~\cite{Filtser24}. 


\subsection{An upper bound using the Sphere Packing problem}
\label{ssec:SP-high}

Our second upper bound uses reduction to the classical \emph{sphere packing} problem, that asks for the maximal possible \emph{density} of a set of non-intersecting congruent spheres in $\mathbb{R}^d$.
\begin{definition}
	The density of a sphere packing (i.e., collection of pairwise disjoint congruent spheres) $P= \cup P_i$ in $\mathbb{R}^d$ is
	\[	
	\limsup_{r \rightarrow \infty}  \frac{\lambda \left(B(0,r) \cap \bigcup P \right)}{\lambda(B(0,r))}.
	\]
\end{definition}
Intuitively, this measures the fraction of the volume of a large ball covered by the packing.
\begin{notation}
	Denote the maximal density of a sphere packing in $\mathbb{R}^d$ by $\delta_d$, and the volume of the unit ball $B(0,1) \subset \mathbb{R}^d$ by $$v_d = \frac{\pi^{\frac{d}{2}}}{\Gamma(\frac{d}{2}+1)}.$$
\end{notation}

\begin{proposition}\label{Prop:SP-higher}
	Let $T_1,T_2,\ldots$ be a tiling of $\mathbb{R}^d$ in $k$ colors, with tiles of volume $\leq 1$ and minimum distance $t$. Then
	\[
	t \leq \left(2 \left(\frac{\delta_d}{v_d}\right)^{1/d} \right) \cdot k^{1/d} = \left(\frac{2\Gamma(\frac{d}{2}+1)^{1/d} \cdot \delta_d^{1/d}}{\sqrt{\pi}} \right) \cdot k^{1/d}.
	\]
\end{proposition}	
Note that the asymptotic upper bound of Proposition~\ref{Prop:SP-higher} is stronger than the asymptotic upper bound that follows from Proposition~\ref{Prop:BM} by the constant factor $(\delta_d)^{1/d}$. For small values of $k$, the upper bound given by Proposition~\ref{Prop:BM} is stronger.

\begin{proof}
	Let $T$ be a $k$-colored tiling of $\mathbb{R}^d$ that satisfies the assumptions of the proposition, and consider the sequence of balls  $\{B(0,n)\}_{n=1,2,3,\ldots}$. By the pigeonhole principle, there exists a color (say, black) such that for each $n_{\ell}$ in an infinite subsequence $\{n_\ell\}_{\ell=1,2,\ldots}$, the intersection of the black tiles with the ball $B(0,n_\ell)$ has volume of at least
	\[
	\frac{\lambda(B(0,n_\ell))}{k} = \frac{n_\ell^d \cdot v_d}{ k}.
	\]
	As the volume of each tile is at most $1$, we know that for each $n_\ell$, the number of black tiles that intersect $B(0,n_\ell)$ is at least $\frac{n_\ell^d \cdot v_d}{k}$.
	
	 \noindent Pick some value $n_{\ell}$, denote the intersections of black tiles with $B(0,n_{\ell})$ by $T'_1,T'_2,\ldots$, and take one point $x_i$ from each tile $T'_i$. As the minimal distance between two black points in different tiles is $t$, balls of radius $t/2$ around the points $x_i$ are pairwise disjoint. Hence, their total volume is at least
	\[
	\frac{n_\ell^d \cdot v_d}{k} \cdot \left(\frac{t}{2}\right)^d \cdot v_d.
	\]
	On the other hand, each such ball is contained in the ball $B(0,(n_{\ell}+t))$ (since its radius is $t/2$, and it contains a point in $B(0,n_\ell)$). This implies that for any $\epsilon>0$ and for a sufficiently large $\ell=\ell(\epsilon)$, the total volume of these balls must be smaller than $(1+\epsilon)\delta_d \cdot \lambda(B(0,n_{\ell}+t))$, as otherwise, the infinite collection of the balls $B(x_i,t/2)$ (where for each ball $B(0,n_\ell)$ we select $x_i$'s in the way described above, respecting the $x_i$'s selected for smaller values of $n_\ell$) would be a sphere packing of $\mathbb{R}^d$ whose density is larger than $\delta_d$.	Therefore, for a sufficiently large $n_\ell$, we have
	\[
	\frac{n_\ell^d \cdot v_d}{k} \cdot \left(\frac{t}{2} \right)^d \cdot v_d \leq (1+\epsilon)\left(1+\frac{t}{n_{\ell}}\right)^d\delta_d \cdot n_\ell^d v_d,
	\]
	and letting $\epsilon \rightarrow 0$ and $n_{\ell} \rightarrow \infty$, we obtain $
	t \leq 2 (\delta_d/v_d)^{1/d} \cdot k^{1/d}$,
	as asserted.
\end{proof}


\subparagraph*{Discussion.} In the two last decades, there has been a tremendous progress in the research of the sphere packing problem. In 2005, Hales (\cite{Hales05}, see also~\cite{Hales17}) solved the problem for $d=3$, proving a 17'th century conjecture of Kepler. In 2017, in a beautiful short paper, Viazovska~\cite{V17} solved the problem for $d=8$, and shortly after, Cohn, Kumar, Miller, Radchenko, and Viazovska~\cite{CKMRV17} used Viazovska's method along with other tools to solve the problem for $d=24$. For other dimensions, the problem is still open. We can use the results of~\cite{CKMRV17,Hales05,V17}, along with the value of $\delta_2$ that was obtained already by Lagrange, to obtain tight asymptotic upper bounds on $t$ in dimensions $2,3,8,$ and $24$, which are the upper bounds stated in Theorem~\ref{thm:main-asymptotic}.
\begin{itemize}
	\item For $d=2$, Lagrange (1773) showed that $\delta_2=\frac{\pi}{2\sqrt{3}} \approx 0.907$. Hence, we obtain the bound $\frac{t}{k^{1/2}}\leq \frac{2^{1/2}}{3^{1/4}} \approx 1.074$.
	
	\item For $d=3$, Hales~\cite{Hales05,Hales17} showed that $\delta_3=\frac{\pi}{3\sqrt{2}} \approx 0.740$. Hence, we obtain the bound $\frac{t}{k^{1/3}}\leq 2^{1/6} \approx 1.122$.
	
	\item For $d=8$, Viazovska~\cite{V17} showed that $\delta_8=\frac{\pi^4}{2^4 4!} \approx 0.254$. Hence, we obtain the bound $\frac{t}{k^{1/8}}\leq \sqrt{2} \approx 1.414$.
	
	\item For $d=24$, Cohn et al.~\cite{CKMRV17} showed that $\delta_{24}=\frac{\pi^{12}}{12!} \approx 0.0019$. Hence, we obtain the bound $\frac{t}{k^{1/24}} \leq 2$.
\end{itemize}
As we show in Section~\ref{sec:lower}, all these bounds are asymptotically tight.

Using the same method, we can leverage any upper bound for the sphere packing problem (namely, upper bound on $\delta_d$) into an upper bound on the error resilience of a rounding scheme in the corresponding dimension. The best currently known bound on $\delta_d$ for large $d$ is by Sardari and Zargar~\cite{SardariZ24}, who obtained a constant-factor improvement over the classical Kabatiansky-Levenshtein~\cite{KL78} bound $\delta_d \leq 2^{-(0.5990+o(1))d}$ (see also~\cite{CZ14}). A list of conjectured bounds for $d \leq 10$ can be found in~\cite{CS95}.


\subsection{An upper bound using the Minkowski-Steiner formula} \label{ssec:Steiner}

This upper bound is applicable in the plane, and under the additional assumption that the tiles $T_i$ are convex. It uses another well-known result, called the \emph{Minkowski-Steiner formula}.
\begin{theorem}[Minkowski-Steiner]\label{Thm:MS}
	Let $A$ be a convex compact set in $\mathbb{R}^2$. Then for any circle $B(0,r)$,
	\begin{equation}\label{Eq:MS1}
	\lambda(A+B(0,r)) = \lambda(A)+\ell(\partial A)r + \pi r^2,
	\end{equation}
	where $\ell(\partial(A))$ is the (1-dimensional) length of the boundary of $A$.
\end{theorem}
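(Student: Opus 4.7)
The plan is to establish the identity first for convex polygons via an explicit geometric decomposition, and then extend to arbitrary convex compact sets by polygonal approximation.

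\textbf{Step 1: The polygonal case.} Let $P$ be a convex polygon with edges $e_1,\ldots,e_n$ of lengths $\ell_1,\ldots,\ell_n$ and exterior angles $\alpha_1,\ldots,\alpha_n$ at the corresponding vertices $v_1,\ldots,v_n$. I would decompose the Minkowski sum $P+B(0,r)$ into three types of pieces: (i) the polygon $P$ itself; (ii) for each edge $e_i$, the rectangle $R_i$ of width $r$ obtained by translating $e_i$ outward along the outer normal to $e_i$, which has area $r\ell_i$; (iii) for each vertex $v_i$, the circular sector $S_i$ of radius $r$ and angular opening equal to the exterior angle $\alpha_i$ (bridging the outer normals of the two adjacent edges). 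Convexity ensures that these pieces meet only in their boundaries. Summing areas gives
\[
\lambda(P+B(0,r)) = \lambda(P) + r\sum_{i=1}^n \ell_i + \frac{1}{2}r^2 \sum_{i=1}^n \alpha_i.
\]
Since $P$ is convex, the exterior angles sum to $2\pi$, so the last term equals $\pi r^2$, and $\sum_i \ell_i = \ell(\partial P)$, giving the desired identity.

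\textbf{Step 2: Approximation.} For a general convex compact $A$, I would choose a sequence of convex polygons $P_n \subset A$ (say, inscribed polygons with vertices chosen from a dense subset of $\partial A$) converging to $A$ in the Hausdorff metric. The key continuity facts for convex sets under Hausdorff convergence are: $\lambda(P_n) \to \lambda(A)$, $\ell(\partial P_n) \to \ell(\partial A)$, and $\lambda(P_n + B(0,r)) \to \lambda(A + B(0,r))$. The first and third follow directly from Hausdorff convergence together with the fact that taking Minkowski sum with a fixed $B(0,r)$ is Lipschitz in the Hausdorff metric, which is a standard property. The second is the classical continuity of perimeter for convex bodies under Hausdorff convergence (e.g., a consequence of Cauchy's formula expressing perimeter as an average of widths). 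Applying Step~1 to each $P_n$ and passing to the limit yields the formula.

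\textbf{Main obstacle.} The polygonal case is really a triviality once one draws the picture, so the only substantive issue is the approximation argument, and specifically the continuity of $\ell(\partial \cdot)$ under Hausdorff limits. This continuity can fail badly for non-convex sets (a jagged curve can approximate a smooth one in Hausdorff distance while having much larger length), but for the class of convex bodies it is well known. Since the paper simply wishes to invoke the Minkowski-Steiner formula as a classical tool, one may prefer to quote it directly from a standard reference (e.g., Schneider's book on convex bodies) rather than give the full approximation argument in detail.
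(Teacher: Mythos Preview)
Your proof is correct and follows the standard textbook route (decompose for polygons, then approximate). Note, however, that the paper does not prove this theorem at all: it is stated as a classical result and simply invoked as a tool, exactly as you yourself anticipate in your final paragraph. So there is no ``paper's own proof'' to compare against; your argument stands on its own as a correct, self-contained justification that the paper chose to omit.
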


\medskip \noindent By substituting~(\ref{Eq:MS1}) into the proof of Proposition~\ref{Prop:BM} (for $d=2$), we obtain
\begin{equation}\label{Eq:MS2}
(n+t)^2 \geq \sum_{i=1}^m \lambda(T''_i) = \sum_{i=1}^m \lambda(T'_i) + \frac{t}{2} \cdot \sum_{i=1}^m \ell(\partial(T'_i)) + \frac{m \pi t^2}{4}.
\end{equation}
To proceed, we may bound the length of the boundary of each $T'_i$ in terms of the area $\lambda(T'_i)$. If we do not make further assumptions on the $T'_i$'s, then the best possible bound of this type is the \emph{isoperimetric inequality}, which asserts that
\[
\ell(\partial(A)) \geq \sqrt{4\pi}\sqrt{\lambda(A)},
\]
for any region $A$ in the plane bounded by a closed curve. Plugging this into~(\ref{Eq:MS2}) yields exactly the same upper bound as Proposition~\ref{Prop:BM} (which comes by no surprise, as the tightness case of the isoperimetric inequality in the plane is circles, just like the tightness case in the proof of Proposition~\ref{Prop:BM}).

However, if we further assume that each tile is a convex polygon, then we can obtain an improved bound, as a function of the number of vertices in each such polygon. We use another classical result, going back to an Ancient Greek mathematician:
\begin{theorem}[Zenodorus]\label{thm:Zenodorus}
	Among all polygons on $n$ vertices with the same area, the perimeter is minimized for the regular $n$-gon.
\end{theorem}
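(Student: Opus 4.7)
The plan is the classical four-step approach: (i) establish existence of a minimizer $P^*$, (ii) show $P^*$ is convex, (iii) show $P^*$ is equilateral, and (iv) show $P^*$ is equiangular, so that $P^*$ must be the regular $n$-gon.

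For (i), I would work with the space of $n$-gons of area exactly $A$ modulo rigid motion, restricted to those whose perimeter is at most some explicit upper bound (e.g., the perimeter of a comparison polygon of area $A$ constructed by hand). This space is compact in the natural topology on ordered vertex tuples, and the perimeter is continuous, so a minimizer $P^*$ exists. For (ii), if $P^*$ had a reflex vertex, reflecting the offending portion across the corresponding chord would preserve the perimeter but strictly increase the area; rescaling back to area $A$ would contradict minimality.

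For (iii), I consider three consecutive vertices $V_{i-1}, V_i, V_{i+1}$ of $P^*$ while keeping all other vertices fixed. The polygon's area changes only through the area of triangle $V_{i-1}V_iV_{i+1}$, while its perimeter changes only through $|V_{i-1}V_i|+|V_iV_{i+1}|$. Holding the triangle's area fixed confines $V_i$ to a line parallel to $V_{i-1}V_{i+1}$, and a standard reflection-across-that-line argument shows that $|V_{i-1}V_i|+|V_iV_{i+1}|$ is minimized exactly when $V_i$ lies on the perpendicular bisector of $V_{i-1}V_{i+1}$, i.e.\ when $|V_{i-1}V_i|=|V_iV_{i+1}|$. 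Applied at every $i$, all sides of $P^*$ are equal.

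Step (iv) is the main obstacle, because the two-vertex exchange from (iii) cannot alter angles. My plan is to show $P^*$ is cyclic (all vertices on a common circle). Fix any four consecutive vertices $V_{i-1}, V_i, V_{i+1}, V_{i+2}$ and hold $V_{i-1}, V_{i+2}$ in place; the three side lengths $|V_{i-1}V_i|, |V_iV_{i+1}|, |V_{i+1}V_{i+2}|$ together with the fixed fourth side $|V_{i-1}V_{i+2}|$ give a one-parameter family of quadrilaterals $V_{i-1}V_iV_{i+1}V_{i+2}$ with prescribed side lengths. By the classical fact that, among quadrilaterals with prescribed side lengths, the cyclic one has maximum area, these four vertices must be concyclic at $P^*$; otherwise a deformation preserves the polygon perimeter while strictly increasing its area, and rescaling contradicts minimality. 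Running this at every $i$ and using that three points determine a unique circle, all vertices of $P^*$ lie on a common circle. An equilateral polygon inscribed in a circle has equal chords subtending equal central angles, hence equal interior angles, and is therefore regular. The genuine difficulty in (iv) is that it requires a four-vertex deformation together with invocation of the cyclic-quadrilateral area maximization, itself a separate nontrivial classical lemma that one could alternatively prove by a Lagrange-multiplier computation on the constrained quadrilateral configuration space.
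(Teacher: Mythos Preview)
The paper does not actually prove this theorem. It is quoted as a classical result attributed to the Ancient Greek mathematician Zenodorus and is used as a black box in the proof of Proposition~\ref{Prop:MS}; no argument for it is given anywhere in the paper. So there is no ``paper's own proof'' to compare your attempt against.

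As for your proposal itself, the four-step strategy (existence, convexity, equilateral, cyclic) is the standard modern route and is sound in outline. A few technical points you glossed over are worth flagging. In~(i), the compactness argument must deal with degenerations where vertices coalesce or become collinear, so that the limiting object is an $m$-gon with $m<n$; one typically rules this out a posteriori by induction on $n$ (the regular $m$-gon has strictly larger perimeter than the regular $n$-gon at the same area, so a degenerate limit cannot be the minimizer). In~(ii), reflecting a single reflex vertex across the chord $V_{i-1}V_{i+1}$ can create self-intersections; the cleaner fix is to pass to the convex hull, which weakly decreases perimeter and weakly increases area, and then handle the possible drop in vertex count via the same induction as in~(i). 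In~(iv), your quadrilateral argument is correct, but you should note that flexing the quadrilateral $V_{i-1}V_iV_{i+1}V_{i+2}$ with fixed side lengths, starting from a convex configuration, stays simple and convex for small perturbations, which is all you need. The concluding step---that four consecutive vertices concyclic for every $i$ forces a common circle because three points determine it---is fine. With these caveats addressed, your argument is a complete and correct proof of a statement the paper merely cites.
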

Recall that the perimeter of a regular $l$-gon $P_l$ of area $\lambda(P_l)$ is
\[
\ell(\partial(P_l)) = \frac{2\sqrt{l}}{\sqrt{\cot(\pi/l)}} \cdot \sqrt{\lambda(P_l)}.
\]
Plugging this into~(\ref{Eq:MS2}), we obtain the following.
\begin{proposition}\label{Prop:MS}
	Let $T_1,T_2,\ldots$ be a tiling of the plane in $k$ colors, with tiles of area $\leq 1$ and minimal distance $t$. Assume than all tiles are convex polygons with at most $l$ vertices. Then
	\[
	t \leq \frac{-\alpha_l+\sqrt{\alpha_l^2-4(1-k)\pi}}{\pi},
	\qquad \mbox{where} \qquad \alpha_l = \frac{2\sqrt{l}}{\sqrt{\cot(\pi/l)}}.
	\]
\end{proposition}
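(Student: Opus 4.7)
\medskip \noindent \textbf{Proof proposal.} The plan is to run exactly the argument of Proposition~\ref{Prop:BM}, but with the Brunn--Minkowski lower bound on $\lambda(T''_i)$ replaced by the Minkowski--Steiner identity (which is available here since the tiles are convex), together with a Zenodorus-based lower bound on $\ell(\partial T'_i)$ in terms of $\lambda(T'_i)$. The final step is then to solve a quadratic inequality in $t$ instead of a linear one.

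\medskip \noindent \textbf{Setup.} First, choose a square $S$ with $\lambda(S)=n^2$ and, by pigeonhole, a color (say black) whose tiles cover at least $n^2/k$ of $S$. To avoid the issue that intersecting a tile with $\partial S$ can create extra vertices and thereby invalidate the "at most $l$ vertices" assumption, I would restrict attention to those black tiles $T'_1,\ldots,T'_m$ that are fully contained in $S$. Since the tiling is locally finite and each tile has diameter $O(1)$ (by the volume bound and convexity), at most $O(n)$ tiles can meet $\partial S$, so $\sum_i \lambda(T'_i) \geq n^2/k - O(n)$ and $m \geq n^2/k - O(n)$, and each $T'_i$ is now a genuine convex polygon with at most $l$ vertices.

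\medskip \noindent \textbf{Main estimate.} For each $T'_i$, the Minkowski--Steiner formula (Theorem~\ref{Thm:MS}) with $r=t/2$ gives
\[
\lambda(T''_i) \;=\; \lambda(T'_i) \;+\; \tfrac{t}{2}\,\ell(\partial T'_i) \;+\; \tfrac{\pi t^2}{4}.
\]
Zenodorus's theorem, applied to each convex polygon $T'_i$ (which has at most $l$ vertices) together with the formula for the perimeter of a regular $l$-gon, yields $\ell(\partial T'_i) \geq \alpha_l\sqrt{\lambda(T'_i)}$. Using $\lambda(T'_i)\leq 1$, so that $\sqrt{\lambda(T'_i)} \geq \lambda(T'_i)$, summing over $i$, and using that the inflations $T''_i$ are pairwise disjoint and contained in $S+B(0,t/2)$ of area $(1+o(1))n^2$, I obtain
\[
(1+o(1))n^2 \;\geq\; \sum_i \lambda(T'_i) \;+\; \tfrac{\alpha_l t}{2}\sum_i \sqrt{\lambda(T'_i)} \;+\; m\,\tfrac{\pi t^2}{4}
\;\geq\; (1-o(1))\frac{n^2}{k}\Bigl(1+\tfrac{\alpha_l t}{2}+\tfrac{\pi t^2}{4}\Bigr).
\]

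\medskip \noindent \textbf{Solving for $t$.} Dividing by $n^2/k$ and letting $n\to\infty$ gives $\pi t^2 + 2\alpha_l t + 4(1-k) \leq o(1)$, and taking the positive root yields
\[
t \;\leq\; (1+o(1))\cdot\frac{-\alpha_l + \sqrt{\alpha_l^2 - 4(1-k)\pi}}{\pi},
\]
which is the claimed bound. The only real subtlety is the vertex-count issue caused by truncation at $\partial S$; it is resolved (as above) by discarding boundary tiles, which costs only a lower-order term and is absorbed into the $o(1)$. Everything else is a direct combination of the two classical inequalities with the bookkeeping already developed in the proof of Proposition~\ref{Prop:BM}.
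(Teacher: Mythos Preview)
Your proof follows essentially the same route as the paper's: Minkowski--Steiner in place of Brunn--Minkowski, then Zenodorus for the perimeter lower bound, then the same bookkeeping and a quadratic in $t$. You even go a step further than the paper by flagging the vertex-count issue for tiles truncated by $\partial S$ and proposing to discard them; the paper simply applies Zenodorus to the $T'_i$ without comment.

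One correction, though: your justification for why only $O(n)$ black tiles meet $\partial S$ is wrong. Convexity together with area $\leq 1$ does \emph{not} give a uniform diameter bound (a triangle of base $\epsilon$ and height $2/\epsilon$ has area $1$ and diameter $\approx 2/\epsilon$). The conclusion is nevertheless correct, for a different reason: each black tile meeting $\partial S$ contains a point $p\in\partial S$, so its inflation $T''_i$ contains the disk $B(p,t/2)$; these disks are pairwise disjoint and all lie in the width-$t$ annulus around $\partial S$, which has area $O(n)$; hence there are at most $O(n/t^2)=O(n)$ such tiles, contributing total area $O(n)=o(n^2)$. With this fix, your argument is complete and matches the paper's.
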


\begin{proof}
	First, we follow the proof sketch presented above until Equation~(\ref{Eq:MS2}), that asserts
	\[
	(n+t)^2 \geq \sum_{i=1}^m \lambda(T''_i) = \sum_{i=1}^m \lambda(T'_i) + \frac{t}{2} \cdot \sum_{i=1}^m \ell(\partial(T'_i)) + \frac{m \pi t^2}{4}.
	\]
	Using Theorem~\ref{thm:Zenodorus}, we obtain
	\[
	(n+t)^2 \geq \sum_{i=1}^m \lambda(T'_i) + \frac{t}{2} \cdot \alpha_l \cdot \sum_{i=1}^m \sqrt{\lambda(T'_i)} + \frac{m \pi t^2}{4},
	\]
	where $\alpha_l=\frac{2\sqrt{l}}{\sqrt{\cot(\pi/l)}}$. As $m\geq \frac{n^2}{k}$ and $\sum_i \sqrt{\lambda(T'_i)} \geq \sum_i \lambda(T'_i) \geq \frac{n^2}{k}$, this implies
	\[
	(n+t)^2 \geq \frac{n^2}{k} \cdot \left(1 + \frac{t}{2} \cdot \alpha_l+\frac{\pi t^2}{4}\right),
	\]
	and consequently, $(1+\frac{t}{n})^2 \cdot k \geq 1+\frac{\alpha_l}{2} t + \frac{\pi}{4}t^2$. Letting $n \rightarrow \infty$, we get $k \geq 1+\frac{\alpha_l}{2} t + \frac{\pi}{4}t^2$. Solving the quadratic inequality, we obtain  	
	\[
	t \leq \frac{-\alpha_l+\sqrt{\alpha_l^2-4(1-k)\pi}}{\pi},
	\]
	as asserted.
\end{proof}

\subparagraph*{Discussion.} For small values of $l$, the upper bound obtained in Proposition~\ref{Prop:MS} is rather strong. For example, for $l=3$ we obtain $t \leq 0.707$ for 3 colors and $t \leq 0.985$ for four colors. 
In Section~\ref{sec:sub:bricks} we will present tilings using rectangular tiles that achieve $t = \frac{1}{\sqrt{2}}$ for $3$ colors and $t=1$ for $4$ colors.   Proposition~\ref{Prop:MS} shows that no tiling with triangular tiles can achieve these bounds.

As $l$ increases, the bound of Proposition~\ref{Prop:MS} becomes weaker and approaches the bound of Proposition~\ref{Prop:BM}, since circles (for which Proposition~\ref{Prop:BM} is tight) can be approximated to any precision by convex polygons with a sufficiently large number of vertices.

\subparagraph*{Higher dimensions.}
The Minkowski-Steiner formula has a higher-dimensional analogue:
\begin{equation}\label{Eq:MS-higher}
\lambda(A+B(0,r)) = \lambda(A)+\lambda_{d-1}(\partial A)r + \sum_{j=2}^{d-1} \lambda_j(A) r^j + \frac{2\pi^{d/2}}{d\Gamma(d/2)} r^d,
\end{equation}
where $\lambda_{d-1}(\partial(A))$ is the $(d-1)$-dimensional volume of the boundary of $A$, and $\lambda_j(A)$ are continuous functions of $A$ called \emph{mixed volumes} (which depend on volumes of certain \emph{outer parallel bodies}, and were the original context in which outer parallel bodies were defined).

In order to obtain an effective upper bound on the error resilience using~(\ref{Eq:MS-higher}), one has to effectively bound the mixed volumes $\lambda_j(A)$ from below in terms of $\lambda(A)$, which seems to be a challenging task.

\section{Constructions of Error Resilient Space Partitions}
\label{sec:lower}

In this section (like in Section~\ref{sec:upper}), we consider tilings of $\mathbb{R}^d$ by tiles $T_1,T_2,\ldots$ of volume at most $1$. Each tile is colored in one of $k \geq d+1$ colors, and our goal is to maximize the minimum distance $t$ between two points of the same color that belong to different tiles.

We obtain lower bounds on the maximum achievable value of $t$ for various values of $d,k$, by constructing explicit tilings. First, we present \emph{brick-wall} tilings, that provide lower bounds for small values of $d,k$. Then we present tilings based on \emph{close sphere packing}, that show the tightness of our asymptotic bounds for $d=2,3,8,24$ and $k \to \infty$, thus proving the `lower bound' part of Theorem~\ref{thm:main-asymptotic}. Finally, we construct a \emph{dimension-reducing} tiling, which shows that for any dimension $d$, positive error resilience can be obtained with $d+1$ colors, thus proving the existence part of Theorem~\ref{thm:main-qualitative}.

\subsection{Brick-wall tilings}
\label{sec:sub:bricks}

We begin with a tiling of the plane, and then use it to construct a tiling of $\mathbb{R}^3$.


\subsubsection{2-dimensional brick wall}
\label{ssec:bricks}

In the 2-dimensional brick wall tiling with $k$ colors, demonstrated in Figure~\ref{fig:brick_wall}, each tile is a rectangle with side lengths $\sqrt{\frac{k-2}{2}}$ and $\sqrt{\frac{2}{k-2}}$ (and so, the area of each tile is $1$). The tiling is periodic, where the basic unit is two rows of adjacent rectangles, colored in a round robin fashion. For an even $k$, the second row is placed exactly below the first row, and the sequence of colors is shifted by $\frac{k}{2}$. For an odd $k$, the second row is indented by half a brick (making the tiling look like a brick wall), and the sequence is shifted by $\frac{k+1}{2}$.

\begin{figure}
	\centering
	\includegraphics[width=1.0\columnwidth]{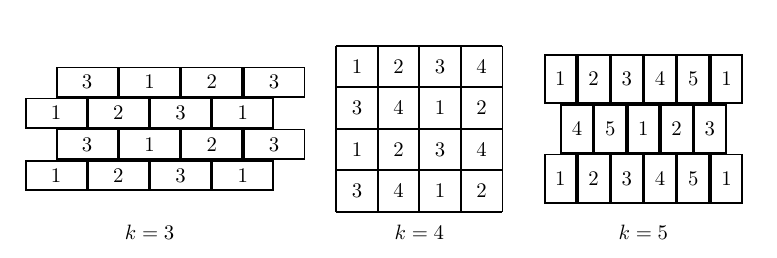}				\caption{The 2-dimensional brick wall tiling for $k=3,4,5$ colors. The ratio between the width and the height of each tile is $2:(k-2)$.}
	\label{fig:brick_wall}
\end{figure}

It is easy to see that the minimal distance between two equi-colored points in different tiles is $\sqrt{\frac{k-2}{2}}$. (This distance is attained both in the vertical and in the horizontal directions. Having the same minimal distance in both directions is the optimization that dictates the side lengths of the bricks.) In particular, we obtain the lower bounds $t \geq \frac{1}{\sqrt{2}}$ for 3 colors, $t \geq 1$ for 4 colors, and $t \geq \sqrt{\frac{3}{2}}$ for 5 colors.


\subsubsection{3-dimensional brick wall}
\label{ssec:bb}


The 3-dimensional brick wall (3BW) tiling, demonstrated in Figure~\ref{fig:3BW}, is a periodic tiling of $\mathbb{R}^3$, colored in 4 colors. In order to present the tiling, we need an auxiliary notation.

\begin{figure}
	\centering
	\includegraphics[width=0.9\columnwidth]{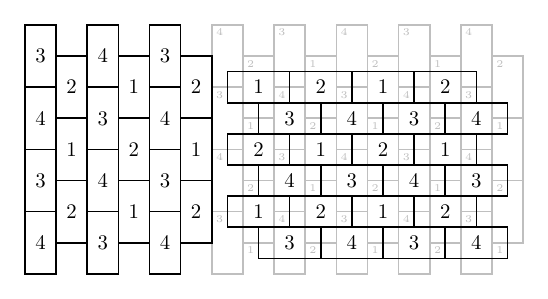}				\caption{The 3-dimensional brick wall tiling}
	\label{fig:3BW}
\end{figure}

\subparagraph*{Notation.} Consider the slab $D=\mathbb{R}\times \mathbb{R} \times [z_1,z_2] \subset \mathbb{R}^3$. We say that a tiling $T_1,T_2,\ldots$ of $D$ is a \emph{fattened plane tiling} if there exists a tiling $T'_1,T'_2,\ldots$ of the plane such that $\forall i: T_i = T'_i \times [z_1,z_2]$.

\subparagraph*{The structure of 3BW.} The 3BW tiling is periodic, where the basic unit consists of two brick wall layers, placed one on top of the other in the way presented in Figure~\ref{fig:3BW}. Each brick wall layer is a fattening of a brick wall tiling of the plane. The underlying plane tiling is a periodic tiling, in which the basic unit consists of four columns of adjacent rectangles, where the even columns are indented by half a brick, making the tiling look like a brick wall. In the lower layer, in odd columns, the colors 1,2 are used alternately, and in even columns, the colors 3,4 are used alternately. Furthermore, the colors in the third and fourth columns are shifted by one,
see Figure~\ref{fig:3BW}. In the upper layer columns are replaced by rows. Note that once the layers are placed, the coloring of one layer fully determines the coloring of the other.



\subparagraph*{The side lengths and the slab heights.} Denote the side lengths of the bricks by $a,2a$. Clearly, the areas of all rectangles in both layers are equal, and thus, in order to make all tiles equal-volume, we choose the height of all slabs to be the same value $z$.

\medskip To find the minimal distance between two equi-colored tiles, we consider several cases:
\begin{itemize}
	\item Two equi-colored bricks in the same layer: The minimal distance between two such tiles is $a$.
	
	\item Two equi-colored bricks at adjacent layers: Here, the minimal distance is $a/2$.
	
	\item Two equi-colored bricks two layers apart: Here, the minimal distance is $z$.
\end{itemize}
Hence, the minimal distance between two equi-colored tiles is $\min\{\frac{a}{2},z\}$, and in order to optimize it we choose $z=\frac{a}{2}$.

\subparagraph*{The lower bound on $t$.} Since $z=\frac{a}{2}$, the volume of each tile is $2a \cdot a \cdot \frac{a}{2} = a^3$. Thus, in order to make the volume of all tiles equal to $1$, we fix $a=1$.
Therefore, this tiling satisfies $t = \frac{a}{2} = 0.5$. This is significantly better than the minimum distance between equi-colored tiles (i.e., the value of $t$) of the dimension-reducing tiling presented in Section~\ref{ssec:dim-reduce} (i.e.,~$t=0.261$), but is still far from the upper bound obtained in Section~\ref{ssec:BM} (namely, $t \leq 0.729$).

\subsection{Tilings based on close sphere packing}
\label{sec:sub:csp}

The tilings presented in this subsection are intended for a large number of colors. We first present the tiling in the case of $\mathbb{R}^2$, where it is easier to describe and analyze,
and then we generalize it to higher dimensions.

\subsubsection{Honeycomb of rectangles}
\label{ssec:hcr}

In the honeycomb of rectangles tiling of the plane with $k=m^2$ colors, each tile is a rectangle with side lengths $a$ and $1/a$, where
\begin{equation}\label{Eq:Honey_rect1}
a=\left(\frac{m^2-2m+1}{\frac{3}{4}m^2-m} \right)^{1/4} \geq \left(\frac{4}{3} - \frac{8}{3m}\right)^{1/4}.
\end{equation}
(The rationale behind the choice of $a$ is explained below). The tiling is periodic, where the basic unit is composed as follows. First, we construct a basic `large rectangle', which is an $m$-by-$m$ square block of tiles, using all the $k=m^2$ colors (in arbitrary order). Then, the basic unit of the tiling is two `fat rows' of adjacent large rectangles, where the second row is indented by half a large rectangle. The coloring of each large rectangle is the same. The tiling, for $k=16$, is demonstrated in Figure~\ref{fig:honey_rect}.
Note that the tiles colored in some single color resemble the shape of a honeycomb lattice (including the centers of the hexagons). This is why we call the tiling `honeycomb of rectangles'.

\begin{figure}
	\centering
	\includegraphics[width=.45\columnwidth]{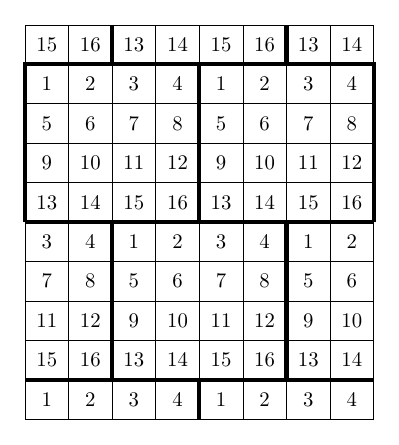}				\caption{The honeycomb of rectangles tiling for $d=2$ and $k=16$ colors. The boundaries of the basic `large rectangles' are depicted in bold. The placement of the tiles in each single color resembles the honeycomb lattice.}
	\label{fig:honey_rect}
\end{figure}

It is easy to see that the minimal horizontal and diagonal distances between two equi-colored tiles are
\[
(m-1)a \qquad \mbox{and} \qquad \sqrt{\left(\frac{m-1}{a}\right)^2+\left(\left(\frac{m}{2}-1\right)a\right)^2},
\]
respectively. By choosing $a$ such that the two distances are equal, we obtain~(\ref{Eq:Honey_rect1}), and the asymptotic lower bound
\[
t\geq \left(\frac{4}{3} - \frac{8}{3m}\right)^{1/4} \cdot (m-1) \geq \left(\frac{4}{3}\right)^{1/4} \cdot \sqrt{k}- O(1) \approx 1.074 \sqrt{k}-O(1),
\]
that matches the upper bound obtained in Section~\ref{ssec:SP-high} up to an additive $O(1)$ term. In particular, this proves the lower bound assertion of Theorem~\ref{thm:main-asymptotic} for $d=2$.

\subsubsection{Close packing of boxes}
\label{ssec:CPB}

\subparagraph*{Motivation.} This construction, a $k$-colored tiling of $\mathbb{R}^3$ where $k \gg 3$, is a natural generalization to $\mathbb{R}^3$ of the `honeycomb of rectangles' tiling presented in Section~\ref{ssec:hcr}. The idea behind the construction is to choose an optimal sphere packing in $\mathbb{R}^3$, and construct a fattened plane tiling, in which the tiles in each color are placed at the centers of the spheres of the packing. (Recall that as the number of colors is large, the size of each tile is negligible with respect to the size of its inflation, and hence, we can treat the tiles as single points.)

We use the classical HCP lattice (one of the most common \emph{close packings}, see~\cite{CS13}), that corresponds to a periodic sphere packing, in which the basic unit is two hexagonal layers of spheres, where in the top layer, each sphere is placed on top in the hollow between three spheres in the bottom layer. The coordinates of the centers of these spheres are:
\[
(r,r,r), (3r,r,r), (5r,r,r),\ldots,(2r,r+\sqrt{3}r,r),(4r,r+\sqrt{3}r,r), (6r,r+\sqrt{3}r,r), \ldots
\]
for the bottom layer, and
\small{
\[
(2r,r+\frac{\sqrt{3}}{3}r,r+\frac{2\sqrt{6}}{3}r), (4r,r+\frac{\sqrt{3}}{3}r,r+\frac{2\sqrt{6}}{3}r), \ldots,
(r,r+\frac{4\sqrt{3}}{3}r,r+\frac{2\sqrt{6}}{3}r),
(3r,r+\frac{4\sqrt{3}}{3}r,r+\frac{2\sqrt{6}}{3}r), \ldots
\]
}
for the top layer.

\subparagraph*{The structure of the tiling.} Assume that the number of colors is $k=m^3$. Each tile is a box with side lengths $(a,b,c)$ to be determined below, and the basic unit is a `large box', that is, an $m\times m \times m$ cubic block of tiles, using all the $k=m^3$ colors (in arbitrary order). Then, the basic unit of the tiling is a two-layer fattened plane tiling, in which each layer is a fattened copy of the `honeycomb of rectangles' tiling. The upper layer is shifted by $\frac{m}{2}a$ in the $x$-coordinate and by $\frac{\sqrt{3}m}{6}a$ in the $y$-coordinate, so that the corners of the large boxes lie in the coordinates of the sphere centers described above (for $r=\frac{m}{2}a$). A quick calculation shows that in order to make this possible, the proportion $(a:b:c)$ should be \emph{approximately} $(2:\sqrt{3}:\frac{2\sqrt{6}}{3})$ (where we neglect the size of each tile with respect to the size of the inflation, that can be absorbed in an $1-o(1)$ multiplicative factor in the final value of $t$). The volume of each tile is clearly $abc$. In order to make the volumes of all tiles equal to $1$, we need
\[
a \cdot \frac{\sqrt{3}}{2}a \cdot \frac{\sqrt{6}}{3}a = 1,
\]
and thus, $a=(\frac{6}{\sqrt{18}})^{1/3} = 2^{1/6} \approx 1.122$. Hence, the side lengths of each tile are
\[
(2^{1/6}, \frac{3^{1/2}}{2^{5/6}}, \frac{2^{2/3}}{3^{1/2}}) \approx (1.122, 0.972, 0.916),
\]
and the minimal distance between two equi-colored points in different tiles is
\[
(m-1)a = (2^{1/6}-o(1))k^{1/d},
\]
which matches the upper bound proved in Section~\ref{ssec:BM}. This proves the lower bound assertion of Theorem~\ref{thm:main-asymptotic} for $d=3$.

\subparagraph*{Generalization to higher dimensions.} A similar tiling can be constructed for any $d>3$ to match any \emph{lattice sphere packing}, assuming the number of colors $k$ is sufficiently large with respect to $d$. Hence, any dense lattice sphere packing can be used to obtain a lower bound on the asymptotic value of $t$ (as $k \to \infty$) in the corresponding dimension. In particular, as the \emph{$E_8$ lattice} and the \emph{Leech lattice} that attain the maximal possible density of sphere packings in dimension 8 and 24 (respectively) are lattice packings, they can be used to construct box tilings showing that the asymptotic upper bounds on $t$ in $\mathbb{R}^8$ and $\mathbb{R}^{24}$ proved in Section~\ref{ssec:BM} are tight. This proves the lower bound assertion of Theorem~\ref{thm:main-asymptotic} for $d=8,24$.

\subsection{The dimension-reducing tiling}
\label{ssec:dim-reduce}

This example shows the existence of a $(d+1)$-colored tiling of $\mathbb{R}^d$ with a positive error resilience, thus proving the `existence' part of Theorem~\ref{thm:main-qualitative}. We exemplify the tiling in $\mathbb{R}^3$, and then explain how to generalize it to higher dimensions.

\subparagraph*{Informal description of the tiling.} Informally, the tiling is constructed as follows.

 \noindent \textbf{Step~1:} We begin with dividing $\mathbb{R}^3$ into basic cubes with side length $a$ (to be determined below).
 Then, we make the `middle part' of each basic cube into a tile and give all these tiles the color~$1$. In order to keep a minimal distance of $t$ between two points colored~1 in different tiles, we must leave a neighborhood of width $t/2$ in each side of each facet of the basic cube. Hence, we are left with `fattened' walls of total width $t$. 

 \noindent \textbf{Step~2:} We make the `middle part' of each wall (i.e., fattened facet) into a tile and give all these tiles the color~$2$.
 (Note that each tile contains points from two adjacent basic cubes). In order to keep a minimal distance of $t$ between two points colored~2 in different tiles, we must leave a neighborhood of $\frac{t}{\sqrt{2}}$ near each edge (i.e., intersection of facets).
 Hence, we are left with a `fattened skeleton'.

 \noindent \textbf{Step~3:} We make the `middle part' of each edge of the skeleton into a tile and give all these tiles the color~$3$.
 (Note that each tile contains points from four adjacent basic cubes.) In order to keep a minimal distance of $t$ between two points colored~3 in different tiles, we must leave an additional neighborhood of $t/\sqrt{2}$ near each vertex (i.e., intersection of edges).
 Hence, we are left with neighborhoods of the corners (a.k.a.~vertices).

 \noindent \textbf{Step~4:} We make the neighborhoods of the corners into tiles and give them the color~$4$. (Note that each tile contains points from 8 adjacent basic cubes.) We have to make sure that the distance between each two such `fattened corners' is at least $t$, and this requirement dictates the choice of $t$. 

We call this tiling \emph{dimension-reducing} since after removing the tiles colored $1$, we are left with fattened versions of the facets (i.e., $2$-dimensional faces) of the basic cube, after removing the tiles colored $2$ we are left with fattened versions of the edges (i.e., $1$-dimensional faces), and after removing the tiles colored $3$ we are left with fattened versions of the corners (i.e., $0$-dimensional faces).




\subparagraph*{Formal definition of the tiling.} For the sake of formality, we give the exact definitions of the tiles below.

\medskip \noindent For $(x,y,z) \in \mathbb{R}^3$, let $f(x,y,z) \in [0,\frac{a}{2}]^3$ be a monotone non-decreasing ordering of $\{\bar{x},\bar{y},\bar{z}\}$, where $\bar{b}=\min\{|b-an|:n \in \mathbb{Z}\}$. That is, we measure the minimal distance to a wall in each coordinate and arrange these minimal distances in a non-decreasing order. For example, for any $a > 0$, the point $(2.4a,7.8a,3.3a)$ is mapped by $f$ to $(0.2a,0.3a,0.4a)$.
\begin{itemize}
	\item The tiles colored~1 consist of all points $(x,y,z)$ in a basic cube such that $f(x,y,z)_1 > \frac{t}{2}$. (These are exactly the points that are at least $\frac{t}{2}$-far from each wall -- the `middle part' of the cube.)
	
	\item The tiles colored~2 consist of points $(x,y,z)$ in a basic cube such that $f(x,y,z)_1 \leq \frac{t}{2}$ and $f(x,y,z)_2 > \frac{t}{2}+\frac{t}{\sqrt{2}}$. (These are the points that are close to a wall only in a single coordinate -- the `middle parts' of the fattened facets). Note that each basic cube intersects six such tiles, and each such tile contains points of two adjacent basic cubes.
	
	\item The tiles colored~3 consist of points $(x,y,z)$ in a basic cube such that $f(x,y,z)_1 \leq \frac{t}{2}$, $f(x,y,z)_2 \leq \frac{t}{2}+\frac{t}{\sqrt{2}}$, and $f(x,y,z)_3 > \frac{t}{2}+\frac{2t}{\sqrt{2}}$. (These are the points that are close to a wall in two coordinates -- the `middle parts' of the fattened edges). Note that each basic cube intersects 12 such tiles (one for each edge), and each such tile contains points of four adjacent basic cubes.
	
	\item The tiles colored~4 consist of the rest of the points (i.e., the fattened neighborhoods of the vertices). Note that each basic cube intersects 8 such tiles (one for each vertex), and each such tile contains points of eight adjacent basic cubes.
\end{itemize}

\subparagraph*{The choice of $t$.} The tiling makes sure that for $i=1,2,3$, the distance between two points colored~$i$ in different tiles is at least $t$. In order to guarantee the same condition for the color~4 as well, we have to choose $t$ such that the distance between two `neighborhoods of corners' will be at least $t$. By the construction, this amounts to the inequality
\[
a-2 \left(\frac{t}{2}+\frac{2t}{\sqrt{2}} \right) \geq t,
\]
or equivalently, $t \leq \frac{a}{2+2\sqrt{2}}$. 

\subparagraph*{The choice of $a$.} An easy computation shows that the largest tiles are those colored~1. Hence, in order to make the volume of all tiles $\leq 1$, we have to choose $a$ such that the volume of each such tile is~1. By construction, this amounts to the equality
\[
1 = (a-t)^3= a^3 \cdot \left(1- \frac{1}{2+2\sqrt{2}}\right)^3,
\]
or equivalently, $a=1+\frac{1}{1+2\sqrt{2}} \approx 1.261$. 

\subparagraph*{The lower bound on $t$.} Summarizing the above, the value of $t$ we obtain is
\[
t = \frac{a}{2+2\sqrt{2}} = \frac{1}{1+2\sqrt{2}} \approx 0.261.
\]
This lower bound for 4-colored tilings of  $\mathbb{R}^3$ is superseded by the `3-dimensional brick wall' tiling presented in Section~\ref{ssec:bricks}, which obtains $t=0.5$. 

\subparagraph*{Generalization to $d>3$.} The construction defined above generalizes naturally to tilings of $\mathbb{R}^d$ with tiles of $d+1$ colors. For $1 \leq i \leq d$, the tiles colored~$i$ consist of points $(x_1,\ldots,x_d)$ in a basic cube such that $f(x_1,\ldots,x_d)_j \leq \frac{t}{2}+\frac{(j-1)t}{\sqrt{2}}$ for all $j<i$, and $f(x_1,\ldots,x_d)_i > \frac{t}{2}+\frac{(j-1)t}{\sqrt{2}}$. (These are the points that are close to a wall in $i-1$ coordinates -- the `middle parts' of the $(d-i+1)$-dimensional faces). The tiles colored~$d+1$ consist of the rest of the points (i.e., the fattened neighborhoods of the vertices).

By the same computation as above, in order to guarantee a distance of $\geq t$ between two equi-colored tiles, we have to choose $$t \leq \frac{a}{2+2(d-1)\sqrt{2}}.$$ In order to make the volume of all tiles $\leq 1$, we have to take $$a \leq 1+\frac{1}{1+2(d-1)\sqrt{2}}.$$ By choosing $a$ to be the largest possible value, we obtain a tiling with minimum distance of $$t=\frac{1}{1+2(d-1)\sqrt{2}} = \Omega \left(\frac{1}{d} \right)$$ between equi-colored tiles. This lower bound of $t \geq \Omega(\frac{1}{d})$ for $(d+1)$-colored tilings of $\mathbb{R}^d$ is not very far from the upper bound we obtained in Section~\ref{ssec:BM} -- namely, $O(\frac{\log d}{\sqrt{d}})$.

\subparagraph*{Related work.} A variant of the dimension-reducing tiling was presented in~\cite[Theorem~5.3]{CzumajJK0Y22} and used in the approximation algorithm for the Euclidean uniform facility location problem presented there. In the construction of~\cite{CzumajJK0Y22}, the tiles are built in a dimension-increasing manner -- the first group of tiles consists of neighborhoods of vertices, the second group of tiles consists of neighborhoods of edges (i.e., $1$-dimensional faces), etc. As is stated in~\cite{CzumajJK0Y22}, the order of magnitude of $t$ obtained by their construction is the same as the order of magnitude obtained by our dimension-reducing tiling. (Checking this requires some computation, since the normalization used in~\cite{CzumajJK0Y22} is bounding the diameter of each tile, while we normalize by bounding the volume of each tile).


\section{Open problems} 

While we fully solved the qualitative question of minimizing the number of colors required for achieving positive error resilience, several questions remain open regarding the maximal resilience rate that can be achieved for a given number of colors. In particular, for dimensions $2,3,8,24$ we determined the exact asymptotic resilience for a large number $k$ of colors, using a connection to the \emph{densest sphere packing} problem. When only very few colors are allowed, the situation is much less clear. For example, we do not even know whether the brick wall constructions we present in Section~\ref{ssec:bricks} have the highest error resilience among space partitions in $\mathbb{R}^2$ with $3$ and $4$ colors. It will be interesting to obtain new upper bounds via different techniques or new lower bound constructions.

\section*{Acknowledgements}

We are grateful to Stephen D. Miller for inspiring discussions on sphere packing, and to anonymous reviewers for referring us to results on the Assouad-Nagata dimension and to the line of works on sparse partitions.

\appendix

\section{Alternative Normalization and Distance Function}
\label{App:Normalization}

\subparagraph*{Normalization.} The primary reason behind our choice to normalize the tiles by upper-bounding their \emph{volume} is that this normalization complies with using the Brunn-Minkowski theorem and with the reduction to sphere packing -- that are our main tools for proving upper bounds.
On the qualitative level, our results can be easily translated to results with respect to other natural normalizations. For example, the assumption that \emph{the radius of each tile is at most $1$} (which means that the $L_2$ distance between each $x \in \mathbb{R}^d$ and its rounded value is at most $1$) implies that the volume of each tile is at most the volume of the unit ball in $\mathbb{R}^d$, which allows translating all our error resilience upper bounds to this `bounded radius' setting. As for the lower bounds, they come from explicit constructions whose error resilience can be recomputed with respect to any other natural normalization.

It is not clear however that our lower bound constructions are quantitatively optimal with respect to other normalizations.

\subparagraph*{Alternative metrics.} Instead of the $L_2$ distance (which is probably the most natural distance metric), one may consider the error resilience problem with respect to other distance metrics.

For example, in order to measure the error resilience with respect to the $L_\infty$ distance, one has to inflate each tile $T$ into $T''=\{y: \exists x \in T, \max_i |x_i-y_i|<r\}$. It turns out that the problem is much easier with respect to this metric. Indeed, assume that the number of colors is $k=m^d$ for some $2 \leq m \in \mathbb{N}$. Consider a periodic tiling, in which each basic unit is a cube with side length $m$ that consists of $m^d$ unit-cube tiles, each colored in a different color. This tiling achieves error resilience of $\frac{m-1}{2}$. On the other hand, it is easy to see that the argument via the Brunn-Minkowski theorem presented in Section~\ref{ssec:BM} implies that any tiling of $\mathbb{R}^d$ with $k=m^d$ colors achieves error resilience of at most $\frac{m-1}{2}$ with respect to the $L_{\infty}$ metric, and thus, the cubic tiling we described is optimal. Lower and upper bounds for other values of $k$ can be obtained by variants of this tiling and the corresponding upper bound proof.

The difference between the metrics comes from the fact that in `cubic' inflation (which is done with respect to the $L_{\infty}$ metric), non-intersecting inflations of cubic tiles fill the entire space, while in inflation by balls (that we have with respect to the $L_2$ metric), large gaps are left between the inflated tiles.

\end{document}